\newcommand{\bledit}[1]{{#1}}
\newcommand{\blcomment}[1]{{}}
\newtheorem{theorem}{Theorem}[section]
\newtheorem{corollary}[theorem]{Corollary}
\newtheorem{lemma}[theorem]{Lemma}
\newtheorem{claim}[theorem]{Claim}
\newtheorem{definition}{Definition}
\newtheorem{example}{Example}
\newcommand{\rjc}[1]{{\color{orange}{#1}}}
\newcommand{\eps}{\epsilon}
\newcommand{\reals}{\mathbb{R}}
\newcommand\hide[1]{}
\DeclareMathOperator*{\argmax}{arg\,max}
\title{Non-quasi-linear Agents in Quasi-linear Mechanisms}
\author{Moshe Babaioff\thanks{Microsoft Research. 
		Email: moshe@microsoft.com} \and Richard Cole\thanks{New York University (NYU). 
		Email: cole@cs.nyu.edu. This work was supported in part by NSF Grant CCF-1909538.} \and Jason Hartline\thanks{Northwestern University. 
		Email: hartline@eecs.northwestern.edu. This work was supported in part by NSF Grant CCF-1618502.} \and Nicole Immorlica\thanks{Microsoft Research. 
		Email: nicimm@microsoft.com } \and Brendan Lucier\thanks{Microsoft Research. 
		Email: brlucier@microsoft.com  }}
\begin{document}

\maketitle

\begin{abstract}
Mechanisms with money are commonly designed under the assumption that agents are quasi-linear, meaning they have linear disutility for spending money.  
We study the implications when agents with non-linear (specifically, convex) disutility for payments participate in mechanisms designed for quasi-linear agents. We first show that any mechanism that is truthful for quasi-linear buyers has a simple best response function for buyers with non-linear disutility from payments, in which each bidder simply scales down her value for each potential outcome by a fixed factor, equal to her target return on investment (ROI).  We call such a strategy ROI-optimal.  We prove the existence of a Nash equilibrium in which agents use ROI-optimal strategies for a general class of allocation problems. Motivated by online marketplaces, we then focus on simultaneous second-price auctions for additive bidders and show that all ROI-optimal equilibria in this setting achieve constant-factor approximations to suitable welfare and revenue benchmarks.	
\end{abstract}

\clearpage
\pagenumbering{arabic}

\section{Introduction}

A fundamental assumption in much of the literature on mechanisms with money is that agents have \emph{quasi-linear} utilities.  This assumption states that agents have a linear disutility for spending money. 
Quasi-linearity is a reasonable assumption when the larger economic environment provides a \emph{constant} outside option value for money.
Yet there are many reasons why such an assumption might be violated.  For example, agents might have different liquidity constraints, or different opportunities for spending leftover cash.  In such cases, agents could exhibit heterogeneous convex disutilities for spending money.
Mechanisms designed to optimize some objective (e.g., revenue or welfare) under the assumption of quasi-linearity might fail to do so in such cases.
%
%

This paper studies properties of mechanisms designed for quasi-linear agents when the quasi-linearity assumption fails.\footnote{Mechanisms for quasi-linear agents do not generally allow agents to directly express the non-quasi-linearity of their utility.  Thus, an incentive-compatible mechanism for quasi-linear agents, i.e., where it is a dominant strategy for an agent to truthfully reveal her valuation function, will not generally remain incentive compatible for agents whose preferences are non-quasi-linear.}  We show that, if agents have convex disutility for spending money, then mechanisms that are incentive compatible for quasi-linear agents have equilibria with an attractive form.  Furthermore, in some important settings these equilibria have good revenue and welfare properties with respect to an appropriate benchmark.  These results suggest that designing mechanisms with quasi-linear agents in mind might be reasonable even if the agents actually have non-linear disutility from spending money.


The models considered in the paper are an abstraction for online marketplaces -- eBay, Booking.com, AdAuctions, Uber, etc.\ -- where shortlived users are matched to strategic long-lived agents \citep[see, e.g.,][]{hartline2019dashboard}.  A market mechanism can be viewed as selling the matching of users to agents.  Agents in these marketplaces -- sellers on eBay, hotels on Booking.com, advertisers in Ad Auctions, drivers in Uber, etc.\ -- tend to have a high volume of transactions per unit time, e.g., days or months, and continuous and stochastic models are appropriate \citep[see, e.g.,][]{athey2010structural}.  Agents often have a choice between marketplaces and this choice can result in a non-quasi-linearity. Agents can sell eclectic items on eBay or Etsy, can rent out their rooms on Booking.com or Expedia, can advertise on Google or Bing Ads, can drive for Uber or Lyft, etc.  The ad auctions of Yandex and Facebook are especially relevant to the questions of this paper as both employ mechanisms that are incentive compatible for agents with quasi-linear preferences.


Informally, our first main result shows that deploying a mechanism that \emph{would be} incentive compatible for quasi-linear agents yields one or more Nash equilibria with an attractive structure. 
Each (possibly non-quasi-linear) agent always has a utility-maximizing report in which they adjust their true valuation by a scalar multiplier.  The multiplier can be interpreted as a single-dimensional tuning parameter that an agent can use to control spend.
At a utility-maximizing report, the multiplier will be equal to the reciprocal of the marginal return on investment (ROI) the agent obtains from the mechanism in their utility-maximizing declaration.
\bledit{We call such a strategy \emph{ROI-optimal}}.  We show that there is at least one pure Nash equilibrium in which all agents use ROI-optimal strategies.\footnote{We emphasize that this is a Nash equilibrium of the full information game, where the bids of each player are a best response to the bids of the others for the given valuations.  Equilibrium existence can depend on tie-breaking; we elaborate on this below.}

We next analyze the efficiency and revenue properties of these equilibria.  We note that one cannot hope to approximate the utilitarian welfare (i.e., the sum of buyer and seller utilities) of the welfare-maximizing allocation in this setting, even in very special cases such as selling a single item.  
This is because 
from the perspective of any mechanism, the behavior of any agent is invariant under any scaling of their utilities, paired with an equal scaling of their disutility for spending money.\footnote{For example, while the behavior of  a quasi-linear buyer with value 1 for an item will be the same as the behavior of a buyer with value $100$ and cost of money $100p$ when spending $p$, maximizing the utilitarian welfare requires the item be allocated to the second buyer. 
}  
It is therefore not possible to directly compare the utilities of different agents and perform the tradeoffs required to maximize utilitarian welfare.
It is in fact not even meaningful to sum utilities that are incomparable and hence utilitarian welfare is nonsensical in our setting.  
See Section~\ref{sec:example.util.welfare} for an example illustrating this difficulty.


We propose an alternative metric for welfare that we argue is more relevant to a platform that interacts with agents through payments.  We define the \emph{transferable welfare} of an agent for an allocation $x$ as the maximum amount that the agent would be willing to pay for $x$, given a take-it-or-leave-it offer.  
Note that this metric is invariant under the operation of scaling an agent's values and disutility for spending money.  It generalizes the notion of ``liquid welfare'' from the literature on budget-constrained buyers~\citep{Dobzinski2014liquid}.  It also corresponds to the definition of welfare commonly used for a monopolist interacting with a quasi-linear buyer, which is measured in dollars precisely because of an assumed equivalence between an agent's utility for an outcome and their willingness to pay. 
Another thing to note is that the transferable welfare generated by a mechanism is the total amount of money, in dollars, that the population of buyers
would be willing to pay in order to implement the allocation of the mechanism.\footnote{Alternatively, 
this is the total amount of money, in dollars, that the population of agents (buyers as well as the seller)
would be willing to pay in order to implement the outcome of the mechanism (allocation as well as payments), given that the seller utility is quasi-linear.}  In this sense, the optimal transferable welfare
corresponds to the maximum possible revenue that could be collected by any mechanism, ignoring incentive constraints beyond individual rationality. 

Solution concept in hand, we turn our attention to the canonical problem of selling multiple divisible goods to additive buyers.  As a reminder, even when the agents' values for goods are additive, agent preferences may be still be non-quasi-linear with respect to payments. In this setting, simultaneous second-price auctions would be truthful for quasi-linear bidders, so our equilibrium characterization 
applies to that mechanism.
We show that, at any {ROI-optimal} equilibrium of simultaneous second-price auctions, the transferable welfare generated is at least half of the optimum. 
Thus, while the total utilitarian welfare can be very different across equilibria, the transferable welfare is approximately optimal at each {ROI-optimal} equilibrium.
{This approximation factor is tight: we provide an example in which the transferable welfare at any {ROI-optimal} equilibrium is indeed half of the optimal transferable welfare.}

We also present a bound on the revenue achieved at {ROI-optimal equilibria}.  The revenue generated by simultaneous second-price auctions at equilibrium can be quite low, which (in practice) encourages the use of reserve prices.  We show that this mechanism, paired with appropriate reserve prices, leads to revenue at {every} {ROI-optimal} equilibrium that approximates the revenue of a corresponding prior-free posted-price benchmark.  Specifically, our benchmark is the maximum possible revenue that could be generated by static sequential posted pricing, where each item is assigned a fixed price and then agents are approached in a revenue-maximizing order and asked to purchase their utility-maximizing sets.  We prove that if the corresponding prices are used as reserves in the second-price auctions, then the revenue at any {ROI-optimal} equilibrium is at least half of this revenue benchmark, and this analysis is tight. Put another way, for any choice of agent valuations and outside option values, the equilibrium behavior generated by non-uniform value for money does not unduly
reduce revenue relative to what could be achieved by interacting with agents individually and sequentially via posted pricing.  In this sense, the revenue generated by the simultaneous second-price auctions with reserves is robust to equilibrium selection {when focusing on ROI-optimal equilibria}.  

One important subtlety in our results is the handling of ties.  The allocation rule of the mechanism can, in general, be discontinuous.  This is the case for the VCG mechanism, for example.  If so, the existence of pure Nash equilibrium can depend delicately on tie-breaking.  Issues with ties can arise even with quasi-linear agents\footnote{Consider a single-item first price auction with two bidders, where bidder $1$ has value $1$ and bidder $2$ has value $2$.  Then a pure Nash equilibrium exists only if ties are broken in favor of bidder $2$.\label{foot::NEexist}} but the presence of non-linearities adds an extra level of delicacy.
In practice this is arguably not a concern: we show that if the interim allocation and payment rules are continuous then the existence of pure equilibria is guaranteed regardless of how ties are broken.  Such continuity might arise due to noise in the mechanism itself, such as the natural variability across impressions in an online advertising platform.  But if discontinuities cannot be ruled out, we show that our pure Nash equilibria always exist if one appropriately endogenizes the tie-breaking rule. 
The proof follows by interpreting the discontinuous allocation rule as a limit of carefully-crafted continuous rules, then using the resulting convergent sequence of equilibria to define a tie-breaking rule for the original mechanism.

\subsection{Related Work}

Our work is related to the literature on mechanism design with
non-quasi-linear agents.  One commonly-studied form of
non-quasi-linearity is risk aversion in buyers.  In this literature,
buyers typically have single-dimensional values and have a concave
utility for wealth, and the goal is to characterize revenue-optimal
mechanisms~\citep{maskin1984optimal} or compare the revenue and
equilibrium properties of common auction formats~\citep{HuMZ2010}.
\citet{fu2013prior} show that first-price auctions are approximately
revenue-optimal for single-dimensional buyers with a form of concave
utilities corresponding to capped quasi-linearity.
\citet{GreenwaldOS2018} consider the
impact of convex cost of money curves in the context of
single-dimensional buyers, and demonstrate that all-pay auctions are
approximately revenue-optimal for certain polynomial cost functions;
their utility model is closest to our own.  Our buyers are also risk
averse (i.e., have concave utility for wealth), but have
multi-dimensional types.  \bledit{\citet{alaei2013simple} show
that revenue maximization in a multi-dimensional multi-agent setting 
reduces (exactly or approximately) to a single-agent 
problem in a variety of settings, including risk-averse agents.}

One special case of concave utility is agents with hard budget
constraints, and this setting has been studied in the literature for
multi-dimensional buyers.  In general it is impossible to guarantee
Pareto optimality in auctions with unknown (private) budget
constraints, though some auction formats (such as clinching auctions)
can achieve Pareto optimality when budgets are publicly
known~\citep{Dobzinski2012budget}.  On the other hand, it is possible
to achieve a constant approximation to the optimal revenue in the
presence of private budgets~\citep{abrams2006revenue,borgs2005multi}
with an approximation that degrades with the budget-dominance
parameter, and revenue-optimal mechanisms can be characterized for a
single multi-dimensional agent with a
public~\citep{laffont1996optimal} or private budget
\citep{pai2014optimal}.  
\citet{feng2020simple} combine the approach of
\citet{alaei2013simple} with the analysis of \citet{abrams2006revenue}
to approximately reduce mechanism design for non-linear agents to
single-dimensional mechanism design for linear agents.  The mechanisms
they contruct are cognizant of the non-linearity of the agents, whereas 
the mechanisms we study are not.

\bledit{While Pareto optimality cannot be guaranteed in the presence of private budget constraints}, there is a line of
work studying the ``liquid welfare'' of a mechanism, defined to be the
total value of an allocation in which each buyer's valuation is capped
at their budget.  As it turns out, the optimal liquid welfare can be
approximated for budget-constrained buyers in various
settings~\citep{Dobzinski2014liquid,Azar2017liquid}.  Our notion of
transferable welfare coincides with liquid welfare for
budget-constrained agents, and can be viewed as an extension to more
general utility models.

Our work is also closely related to the literature on budget smoothing in advertising auction platforms, which focuses on throttling or pacing the spend of budget-constrained agents so that their budget is expended uniformly over a given time period.  This framework is mechanically very similar to ours in the particular case that buyers are quasi-linear up to a hard budget.  \citet{Balseiro2015mean} studied mean-field equilibria of repeated second-price auctions with budget-constrained single-dimensional agents, and found the throttling is approximately utility-maximizing.  Later work extended to many different budget management strategies (again with single-dimensional agents) and compared their resulting equilibria both empirically and analytically~\citep{Balseiro2017budget}.  Algorithmically, \citet{mehta2007adwords} initiated a long line of literature studying the optimal allocation for budgeted buyers in online settings.

Perhaps closest to our work is that of \citet{Conitzer2018SPA}, who study budget smoothing in the case of additive preferences over multiple divisible goods and introduce the notion of a pacing equilibrium, where the platform shades bids on behalf of the budget-constrained agents.  They demonstrate that when the items are sold by second-price auctions, equilibria exist and may not be unique, and such equilibria can have poor welfare or revenue properties.  Moreover, it is NP-hard to find a revenue- or welfare-maximizing equilibrium, though they provide numerical evidence that equilibria are not difficult to find in practice.  \citet{Conitzer2019FPA} then study a variant with first-price auctions, under the assumption that values must be scaled uniformly (which is no longer necessarily utility-optimal).  They show that the resulting outcomes correspond to a form of market equilibrium and can be computed efficiently.  Our results apply to a more general model of agent utility, and we show that equilibria in the second-price auction variant achieve good approximations to suitable welfare and revenue benchmarks (which differ from the measures studied by \citet{Conitzer2018SPA}). 

There is also a line of literature that directly considers advertiser ROI in online ad auctions.  Golrezaei, Lobel and Paes Leme provide evidence that advertisers face minimum-ROI constraints, and design optimal auctions for single-dimensional agents under such constraints~\citep{Golrezaei2018ROI}.  Our buyers are multi-dimensional and we model incentives with respect to marginal ROI rather than average ROI targets.  \citet{borgs2007dynamics} consider dynamic auctions in which agents optimize by equating ROI across goods, and show that adding perturbations to the auctions can ease convergence to an equilibrium outcome under first-price payment rules.

\section{Model}


\paragraph{Problem Setting}

There is one seller with linear value for money.  There are $n$ agents and a space $X$ of potential outcomes.
Agent $i$ has a valuation function $v_i : X \to {\reals_{\geq 0}}$.  We'll write $V$ for the space of allowable valuation functions, and we will assume that $V$ is closed under scalar multiplication.  We write $v = (v_1, \dotsc, v_n)$ to denote the profile of valuations.  We will assume that the outcome space $X$ is convex; one way to satisfy this condition is to allow randomization and define the value of a lottery to be the expected value of the resulting outcome. 
{For notational convenience we write $x_i$ for the part of outcome $x$ that is payoff-relevant to agent $i$, so that \bledit{$v_i(x_i) = v_i(x)$} 
for all $i$.  We refer to $x_i$ as the allocation to agent $i$.}
This general formulation covers many common mechanism design problems, including the following:
\begin{itemize}
\item \textbf{Combinatorial auctions of indivisible goods.}  There is a set $G$ of indivisible goods, and an outcome is an  allocation that assigns a disjoint subset of goods $x_i \subseteq G$ to each agent $i$.  A valuation function $v_i$ assigns a value to each potential subset $x_i$ that $i$ might receive.
\item \textbf{Additive valuations over divisible goods.}  There are $m$ divisible goods and each agent's valuation is linear in the quantity of good received and additive across goods.  In this case an allocation is given by $x = (x_{ij})$ where $x_{ij} \in [0,1]$ is the fraction of good $j$ allocated to agent $i$, and the valuation of agent $i$ for allocation $x$ is $v_i(x) = \sum_j v_{ij} x_{ij}$ where $v_{ij} \geq 0$ is agent $i$'s value per unit of good $j$.
\item \textbf{Public projects.}  There is a set $X$ of projects.  Each agent has a valuation $v_i(x) \geq 0$ for each project $x \in X$.  A single project is to be selected and each agent will receive their value for the chosen project.  {Note that in this setting we have $x_i = x$ for all $i$, since the entire outcome is payoff-relevant for each agent.}
\end{itemize}
Going forward, we will be using the terms agent, buyer, and bidder interchangeably.

\paragraph{Value for Money}
Each buyer has an outside option value for money, described by a bound $B_i\in \reals_{> 0} \cup \{\infty\}$ 
and a function $C_i \colon [0,B_i] \to R_{\geq 0}$. The buyer cares about their spending {\em in expectation}.\footnote{We can alternatively imagine that the buyer participates in many auctions, and their incentives and spending constraints bind in aggregate over many instances.}
A finite value of $B_i$ represents a hard budget constraint at $B_i$, meaning the expected payment may not exceed $B_i$;  and $B_i=\infty$ means there is no budget constraint. If $p_i$ is the expected payment of buyer $i$, then
$C_i(p_i)$ is the utility that buyer $i$ foregoes by paying $p_i\leq B_i$ to the seller. 
$C_i$ is a strictly increasing weakly convex function, continuous on $[0,B_i]$ and normalized so that $C_i(0) = 0$.  In particular this implies that $C_i$ is invertible on $[0,B_i]$. 
For example, in the quasilinear setting, $B_i=\infty$ and $C_i(p_i) = p_i$ for all $p_i\geq 0$, while a standard hard budget constraint of $B$ is represented by $B_i=B$ and $C_i(p_i) = p_i$ for $p_i \leq B$.
When $C_i$ is differentiable, we will write $R_i(p_i)$ for the derivative of $C_i$ at $p_i$.
Without loss of generality, by separately rescaling each buyer's value units, we can assume that $R_i(0) = 1$ for every $i$.\footnote{That is, the first {(infinitesimal)} unit of money is worth one {(infinitesimal)} unit of value.}  This implies that for any $p_i$, the minimum outside option value for money for bidder $i$ at $p_i$ is at least $1$. %
It is also convenient to scale the unit of money (and hence each bidder's value unit)
so that for every buyer, the bundle of all the goods has value at most 1.\footnote{It may appear here that we've lost generality by scaling twice.  But the first scaling choice is simply
setting the ratios of each bidder's value unit to the money unit; the second choice is equivalent to specifying the unit of measurement for money, and hence for each bidder, the unit of measurement for its value.} 

We define buyer $i$'s \emph{willingness to pay} for an allocation $x_i$ 
to be $W_i(x_i) = C_i^{-1}(v_i(x_i))$, and the \emph{transferable welfare} of outcome $x$ to be $W(x) = \sum_i W_i(x_i)$.  Note that this quantity is well-defined, as $C_i$ is strictly increasing on $[0, B_i]$ and $v_i(x_i) < \infty$ for all $x_i$.  \bledit{For notational convenience we define $C_i(p) = \infty$ for $p > B_i$ and $C_i^{-1}(v) = B_i$ for $v > C_i(B_i)$, so that $W_i(x_i) = B_i$ whenever $v_i(x_i) > B_i$.}

\blcomment{I split the example into 3 parts.  The typo in the example has been fixed (utilities in the symmetric equilibrium are 1.5, not 1).}

\begin{example}\label{example:basic}
To illustrate the model, suppose there are two divisible items to allocate among two buyers.  Valuations are additive, where $v_{ij}$ denotes buyer $i$'s value per unit of good $j$.
These values are given by $v_{11} = 2, v_{12} = 1, v_{21} = 1$, and $v_{22} = 2$.\footnote{{To keep the example simple, we have not normalized the valuations and the money unit.}}  Each buyer has a hard budget, with budgets given by $B_1 = B_2 = 0.5$.  The buyers are quasi-linear up to their budgets: $C_i(p) = p$ for $p \leq B_i$.  
One possible allocation is $x=(x_1,x_2)$ with $x_1=(1,0)$ and $x_2=(0,1)$; that is, buyer 1 gets item 1 and buyer 2 gets item 2. The value of this allocation for each buyer $i$ is $v_i(x_i)=2$.  Note that this is the allocation that maximizes the sum of agent values.  Each agent's willingness to pay for their allocation is $W_i(x_i) = 0.5$, due to the hard budget constraints.
\end{example}

\paragraph{Utility and Bids} A direct-revelation mechanism is described by an allocation rule $x \colon V^n \to X$ and payment rule $p \colon V^n \to \reals^n$ that map a profile of valuation reports (or \emph{bids}) to an outcome and a profile of payments.  Given a fixed mechanism $M$, we write $x_i(b)$ and $p_i(b)$ to denote the allocation to agent $i$
and the payment that agent $i$ is charged when buyers bid according to $b$.
Then, if buyer $i$'s valuation is $v_i$, its utility 
is ${U}_i(b; v_i) = {v}_i(x_i(b)) - C_i(p_i(b))$.
A pure Nash equilibrium is a profile of bids $b$ such that each buyer is simultaneously best-responding given their valuation.  That is, bids $b = (b_1, \dotsc, b_n)$ form a pure Nash equilibrium for valuation profile $v$ if, for each $i$ and each possible bid $b'_i$,
$U_i(b_i, b_{-i}; v_i) \geq U_i(b'_i, b_{-i}; v_i)$.
We are interested in proving the existence of equilibria (and characterizing their properties) when bidders are using a particular form of bidding strategy, which is to choose some $\alpha_i \in [0,1]$
and then set $b_{i} = \alpha_i v_{i}$.  That is, each buyer scales their valuation function by a uniform scaling factor $\alpha_i \in [0,1]$. 
We call such a bid a \emph{uniform scaling}.  
We write $\alpha = (\alpha_i)$ for a profile of scaling factors.  
{We denote by $x_i(\alpha \cdot v) {= x_i(\alpha_1 v_1, \alpha_2 v_2,\ldots, \alpha_n v_n)}$ the allocation of agent $i$ when agents each bid according to a scaling strategy given by 
vector $\alpha$ of scaling factors}.

\begin{example}\label{example:basic2}
Consider again the setting of Example~\ref{example:basic}, and suppose the goods are being sold using the VCG mechanism.  Since valuations are additive, this is equivalent to simultaneous second-price auctions: each agent places a bid on each good, and the highest-bidding agent wins the good and pays the second-highest bid. 
Suppose the buyers bid their true valuations: $b_1 = (2,1)$ and $b_2 = (1,2)$.  Then each buyer $i$ will receive all of good $i$ (for $i \in \{1,2\}$) and pay $1$.  As this exceeds the buyers' budgets, truthful reporting is not an equilibrium.
If each buyer $i$ instead bids \emph{half} of their true valuation, applying a uniform scaling strategy with $\alpha_i = 0.5$, then the allocation remains unchanged
but each buyer will instead pay $0.5$, exhausting their budget. The utility of every buyer is $1.5$, and one can check that neither buyer can increase their utility by bidding differently.
	This is therefore a pure Nash equilibrium in uniform scaling strategies, which is symmetric (in scaling factors) and results in a value-maximizing allocation. 
\end{example}

Pure Nash equilibria need not be unique, even when restricting to equilibria where agents use uniform scaling strategies.  Moreover, the set of equilibria can depend on the way ties are broken, as the next example shows.

\begin{example}\label{example:basic3}
Consider once again the setting of Example~\ref{example:basic}.
There can be other equilibria of the VCG mechanism beyond the one described in Example~\ref{example:basic2}, where buyers scale their valuations by different factors.
Consider the case that $\alpha= (\alpha_1,\alpha_2)= (1/3,2/3)$,
	and each buyer $i$ is bidding $\alpha_i v_i$, so buyer $1$ bids $b_1=(2/3,1/3)$ and buyer $2$ bids $b_2=(2/3,4/3)$. 
	The price of item $1$ is $2/3$, and item $2$ is priced at $1/3$. Note that there is a tie for item $1$, so item $1$ can be split between the buyers (and each buyer will pay $2/3$ per unit received), while item $2$ will be fully allocated to buyer $2$.
	Consider the allocation $y_1=(3/4,0)$ and $y_2=(1/4,1)$, which divides item $1$ in a particular way.  Then each buyer pays $0.5$ and exhausts their budget. One can check that no buyer can gain by bidding differently.
	Of course, one can reverse the roles of the buyers and choose $(\alpha_1,\alpha_2) = (2/3, 1/3)$.  This will also be a pure Nash equilibrium under appropriate tie-breaking.  As it turns out, these are the only outcomes that can arise as a pure Nash equilibrium.  So the set of equilibria is discrete, finite, and discontinuous, and different equilibria result in different value profiles for the buyers.
\end{example}

\subsection{A Motivating Example: Platform Competition}
\label{sec:model.competition}


As a motivating example for a non-linear value for money spent, consider an advertising platform in a competitive environment in which other platforms are present.  
Suppose each buyer has a certain total budget to spend on advertising across all platforms. This buyer derives value from different allocations (i.e., impressions) on our platform, represented by a valuation function $v$.  Fixing the behavior of others, each possible report of the buyer results in a payment $p$ and allocation $x$ (and hence value $v(x)$) from our platform's mechanism.  Given that the agent spends $p$ on the platform, she will seek to maximize her expected value over all strategies with expected payment $p$ (including possibly randomizing between reports).  The mechanism therefore describes a frontier of potentially optimal (payment, {value}) outcomes, which will be a concave curve $\overline{V}$ that maps payment to {value}; see Figure~\ref{fig:outside.option}(a).

\begin{figure}[t]
\centering
\includegraphics[width=1.0\textwidth]{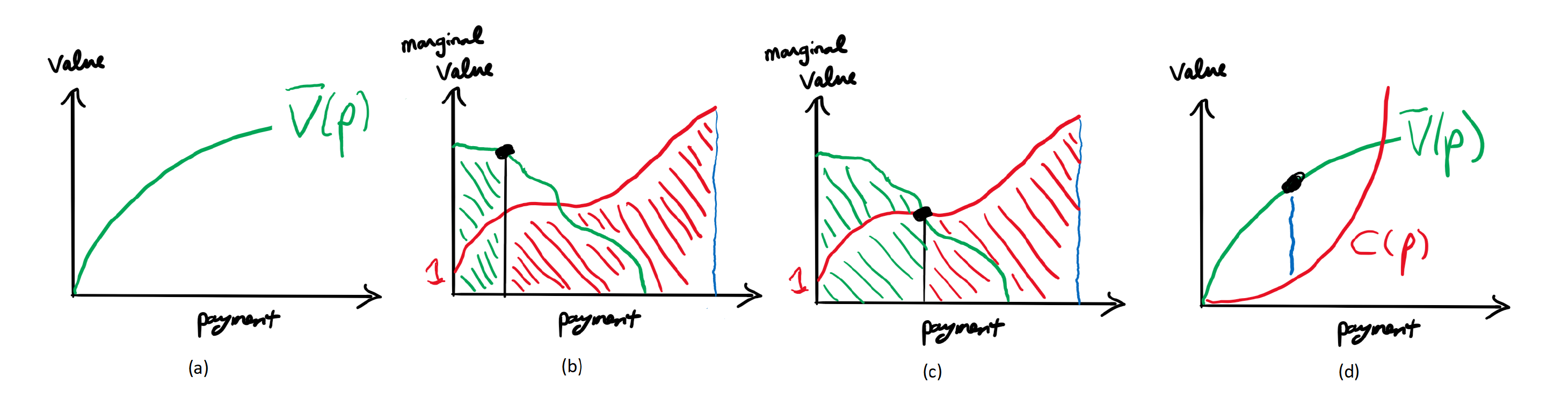}
\caption{\small Visualizing our model for a single agent.  (a) By plotting (payment, {value}) for different reports to a mechanism, one gets a concave frontier $\overline{V}(p)$ of undominated mechanism outcomes.  (b) Taking the derivative of $\overline{V}$ gives the marginal value per unit of spend (green curve), which can be compared with the marginal value lost from outside options (red curve).  A given level of spend (black dot) generates value from the mechanism's allocation (green area) and from outside options (red area).  (c) The optimal level of spend equates the marginal values (black dot), maximizing total value. (d) By integrating, we can interpret the opportunity cost as a convex cost $C(p)$ of spending $p$, and the optimal bid maximizes the difference $\overline{V}(p) - C(p)$.}
\label{fig:outside.option}
\end{figure}

The concavity of $\overline{V}$ has a natural interpretation: the marginal value gained by the agent from the mechanism is weakly decreasing in her spend, as she will always choose the ``best bang per buck'' for each additional dollar spent.  But the agent can also derive value from her outside option(s), where the marginal value per dollar is likewise weakly decreasing. 
The agent therefore maximizes her total value (from all sources) by equating the marginal value obtained from the mechanism with the marginal value obtained from the outside option(s),\footnote{The marginal value obtained per unit of spend, $\Delta v / \Delta p$, is simply $1$ plus the marginal ROI.  Equating marginal values is therefore equivalent to equating marginal ROI.} as this is the point where the opportunity cost of spending more money on the mechanism outweighs any additional value that can be obtained from the platform.  See Figure~\ref{fig:outside.option}(c).  This opportunity cost defines a convex value for money: each additional dollar spent on the platform pulls from increasingly higher-value uses of money on alternative channels.  Each buyer's outside options can therefore be summarized as a convex function $C(p)$ representing her disutility for spending $p$ on the mechanism; see Figure~\ref{fig:outside.option}(d).


\section{Best Responses and Structure of Equilibria}
\label{sec:equil.exact}




We study equilibria of the game that is induced when the seller deploys a mechanism $M$ which would be incentive compatible for quasi-linear agents.
We will show that for any profile of bids, each agent has a best response which is a uniform scaling strategy.  Moreover,
{at an equilibrium of such scaling strategies,}
 each agent's scaling factor corresponds to {her} 
 \emph{marginal value for money} at the equilibrium level of spending.

\subsection{Best Responses and Uniform Scaling}
\label{sec:best.response}



Our first result is that for any agent $i$ and any bids $b_{-i}$ of the other agents, there is always a utility-maximizing bid of the form $\alpha_i v_i$ for some $\alpha_i \in [0,1]$.  Moreover, this $\alpha_i$ will equal the inverse of the agent's marginal cost of money, evaluated at their total payment $p_i(\alpha_i v_i, b_{-i})$.  We call such an $\alpha_i$ ROI-optimal.

\begin{definition}
\label{defn::ROIoptimal}
For a given buyer $i$ with valuation $v_i$ and bids $b_{-i}$ of the other buyers, a uniform scaling bid $b_i=\alpha_i\cdot v_i$ is \emph{ROI-optimal} for agent $i$ if $1/\alpha_i$ is a subderivative of $C_i$ at $P_i$, where $P_i = p_i(b_i, b_{-i})$ is agent $i$'s payment at bid $b_i$.
\end{definition}

Note that if $C_i$ is everywhere differentiable, then the condition for ROI-optimality can be restated as $\alpha_i = 1 / R_i(p_i(b_i, b_{-i}))$.

\begin{theorem}
	\label{thm:best.response.general}
	Fix any mechanism that is truthful for quasi-linear buyers.  Fix any buyer $i$ with valuation $v_i$ and any declared valuations $b_{-i}$ of the other buyers.
	Then there is a uniform scaling bid $b_i = \alpha_i \cdot v_i$ with $\alpha_i \in [0,1]$ 
	such that, under appropriate tie-breaking, 
	$b_i \in \argmax_{b_i}\{ U_i(b) \}$
	and $\alpha_i$ is ROI-optimal.  Moreover, any ROI-optimal choice of $\alpha_i$ is utility-maximizing.  
\end{theorem}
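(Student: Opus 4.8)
The plan is to exploit the fact that truthfulness for quasi-linear buyers makes the mechanism, from buyer $i$'s point of view, equivalent to a fixed menu of (allocation, payment) options. Concretely, I would first invoke the taxation principle: fixing the others' bids $b_{-i}$, let $\mathcal{O} = \{(x_i(b_i', b_{-i}), p_i(b_i', b_{-i})) : b_i' \in V\}$ be the set of outcomes buyer $i$ can induce. Quasi-linear truthfulness says precisely that reporting any valuation $w$ selects from $\mathcal{O}$ an option maximizing $w(x_i) - p_i$. In particular, reporting the scaled valuation $\alpha_i v_i$ (writing $\lambda_i := 1/\alpha_i$) selects an option maximizing $v_i(x_i) - \lambda_i p_i$ over $\mathcal{O}$, so uniform scalings trace out exactly the options supported by a line of slope $\lambda_i$ in (payment, value) space.

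Next I would pass to the concave frontier. Allowing the buyer to randomize over reports convexifies the achievable (payment, value) region, and since value and payment are both linear in the randomization, the upper boundary of this region is a concave, non-decreasing function $\overline{V}(q) = \sup\{v_i(x_i) : (x_i,p_i) \text{ achievable}, p_i \le q\}$; this is the curve of Figure~\ref{fig:outside.option}(a). Two observations drive the argument. First, since a linear objective $v_i - \lambda_i p_i$ has the same maximum over a set and over its convex hull, the pure report $\alpha_i v_i$ already attains $\max_q \{\overline{V}(q) - \lambda_i q\}$, so the option it selects lies on the frontier at a payment where $\lambda_i$ is a superderivative of $\overline{V}$. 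Second, buyer $i$'s true objective over the whole menu is $\max_q \{\overline{V}(q) - C_i(q)\}$, a concave function (concave minus convex) on the compact payment interval $[0,B_i]$, hence attained at some $p^*$.

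With these in hand the two claims follow from subgradient calculus. For the converse direction (any ROI-optimal $\alpha_i$ is utility-maximizing), I would argue directly: if $b_i=\alpha_i v_i$ induces realized payment $P_i$, the first observation gives $\overline{V}(q) \le \overline{V}(P_i) + \tfrac{1}{\alpha_i}(q - P_i)$ for all $q$, while ROI-optimality gives $C_i(q) \ge C_i(P_i) + \tfrac{1}{\alpha_i}(q-P_i)$; subtracting shows $P_i$ maximizes $\overline{V}-C_i$, so no report beats $\alpha_i v_i$. For existence, I would take a maximizer $p^*$ of $\overline{V} - C_i$; the first-order condition $0 \in \partial(\overline{V}-C_i)(p^*)$ yields a value $\lambda^*$ that is simultaneously a superderivative of $\overline{V}$ and a subderivative of $C_i$ at $p^*$. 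Setting $\alpha_i = 1/\lambda^*$ makes the report $\alpha_i v_i$ select (under the tie-breaking that picks payment $p^*$) a frontier option at $p^*$, and $1/\alpha_i = \lambda^*$ being a subderivative of $C_i$ at $P_i = p^*$ is exactly ROI-optimality. That $\alpha_i \in [0,1]$ follows from the normalization $R_i(0)=1$: since $C_i$ is convex and increasing with right-derivative $1$ at $0$, every subderivative is at least $1$, so $\lambda^* \ge 1$.

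The main obstacle I anticipate is the interplay of non-differentiability and tie-breaking. Because $C_i$ is only weakly convex and the allocation rule may be discontinuous, both $\overline{V}$ and $C_i$ can have kinks, so the ``marginal values agree'' condition must be phrased with sub/superdifferential intervals rather than derivatives, and $p^*$ may sit at a kink where a whole interval of slopes is admissible. Relatedly, at the chosen slope $\lambda^*$ several payments can tie for $\max_q\{\overline{V}(q) - \lambda^* q\}$, and the mechanism's own tie-breaking decides which pure option the report $\alpha_i v_i$ lands on; this is exactly why the existence claim is qualified by ``under appropriate tie-breaking,'' and care is needed to ensure the selected option attains value $\overline{V}(p^*)$ rather than a dominated point at the same payment. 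The converse direction is insensitive to this, since it reasons from whatever payment is actually realized.
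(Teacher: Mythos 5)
Your proposal follows essentially the same route as the paper's proof: both construct the concave payment--value frontier $\overline{V}_i$, reduce the scaled report $\alpha_i v_i$ to truthful behavior of a quasi-linear buyer with valuation $v_i/\lambda_i$ (your taxation-principle phrasing and the paper's auxiliary linear cost curve $\widetilde{C}_i(p) = \lambda_i p$ are the same reduction), and locate the optimum where a supergradient of $\overline{V}_i$ coincides with a subderivative of $C_i$. Your converse direction---subtracting the supergradient inequality for $\overline{V}_i$ from the subgradient inequality for $C_i$ at the realized payment $P_i$---is a somewhat more direct packaging than the paper's characterization argument and is correct, modulo one implicit step: you need that \emph{any} tie-breaking outcome at report $\alpha_i v_i$ still maximizes the quasi-linear objective $v_i(x_i) - \lambda_i p_i$ over the menu (so that it lies on the frontier with supergradient $\lambda_i$). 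This holds because tie-breaking outcomes in the paper's framework are limits and convex combinations of truthful outcomes at nearby reports, and the quasi-linear indirect utility is continuous in the report, but it deserves a line.

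The one substantive step you flag but do not execute is exactly where the paper's Appendix B works hardest: justifying that ``the tie-breaking that picks payment $p^*$'' is actually realizable by the mechanism. The paper formalizes tie-breaking as selection from the convex closure $\overline{X}$ of the allocation and payment rules, and then shows that when the slope $\lambda^*$ supports a whole interval $[\beta,\gamma]$ of payments on the frontier, the endpoints arise as limits of outcomes under perturbed scalings $v_i/(\lambda^* + \epsilon)$ and $v_i/(\lambda^* - \epsilon)$ as $\epsilon \to 0$, and interior points as convex combinations of these limits---so every frontier point in the interval is a valid tie-breaking outcome at report $\alpha_i v_i$. Without this (or an equivalent closure argument) the existence half of the theorem is incomplete: the menu may contain no pure option at payment $p^*$, and nothing a priori says the mechanism can be made to land there, which matters because ROI-optimality is evaluated at the realized payment. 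Two minor patches: compactness of $[0,B_i]$ fails when $B_i = \infty$, but a maximizer of $\overline{V}_i - C_i$ still exists since $\overline{V}_i$ is bounded while every subderivative of $C_i$ is at least $1$; and at the endpoint $p^* = 0$ the subdifferential of $C_i$ contains slopes below $1$, so one must select $\lambda^* \geq 1$ explicitly (the paper sets $R_i^-(0) = 0$ and $Q_i^-(0) = \max\{1, Q_i^+(0)\}$ for exactly this reason) to guarantee $\alpha_i \leq 1$.
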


The full proof of Theorem~\ref{thm:best.response.general} appears in Appendix~\ref{sec:equil.exact.ties.details}.
As a warm-up, we will present a simplified proof that relies on three additional assumptions we'll make for technical convenience. 

First, we assume that agents do not have hard budgets.  That is, $B_i = \infty$ for all $i$, and hence $C_i(p)$ is well-defined and finite for all $p \geq 0$.

Second, we assume that the mechanism in the statement of Theorem~\ref{thm:best.response.general} admits continuous allocation and payment rules.  More precisely, for any agent $i$ with valuation $v_i$, and any declared valuations $b_{-i}$ of the other agents, we will assume that $v_i(x_i( \alpha_i \cdot v_i, b_{-i}))$ and $p_i( \alpha_i \cdot v_i, b_{-i} )$ are both continuous functions of $\alpha_i \in [0,1]$.  The purpose of this assumption is to avoid subtleties around tie-breaking, since a potential tie between bids (e.g., in an auction mechanism) would correspond to a discontinuity in the allocation and payment rules.  Thus, under this assumption, {ties occur with zero probability} 
and can be safely ignored. 

Third, we'll assume that for any agent $i$ and any declared valuations $b_{-i}$ of the other agents, agent $i$'s payment $p_i( \alpha_i \cdot v_i, b_{-i} )$ is strictly increasing in $\alpha_i \in [0,1]$.  The role of this assumption is to rule out indifferences where an agent faces the same outcome and payment at different choices of uniform scaling factor $\alpha_i$.  This will allow us to additionally show that the utility-maximizing choice of $\alpha_i$ is unique (and hence every utility-maximizing $\alpha_i$ is ROI-optimal).  


%
We now proceed with the proof {assuming these assumptions hold}.  Choose buyer $i$ and fix the bids $b_{-i}$ of the other buyers.  Consider the collection of all possible bids $b_i$ of buyer $i$, 
each of which generates a payment and a value $(p_i(b_i, b_{-i}),v_i(x_i(b_i, b_{-i})))$.  Take the collection of all such points, as plotted with payment on the $x$-axis and value on the $y$-axis, and define $\overline{V}_i(p)$ to be the upper concave envelope of these points.  That is, $\overline{V}_i(p)$ is the maximum possible value that buyer $i$ can obtain by paying $p$, possibly in expectation over some randomization over bids.\footnote{We will show in Claim~\ref{claim:value.per.spend} that each point on the concave envelope actually corresponds to a choice of bid, so no randomization is necessary.}  Recall Figure~\ref{fig:outside.option}.  If we write $\overline{p}$ for the maximum possible payment (which must be finite as the values are bounded), then $\overline{V}_i$ is defined on the range $[0,\overline{p}]$.  Note that $\overline{V}_i$ is continuous, 
concave, weakly increasing, and $\overline{V}_i(0) = 0$.  
%
%
The maximum possible utility attainable by buyer $i$ when paying $p$ is $\overline{V}_i(p) - C_i(p)$.  We claim that for each $p \in [0,\overline{p}]$, the value of $\overline{V}_i(p)$ is achievable with a uniform scaling bid.

\begin{claim}
\label{claim:value.per.spend}
Choose any $p \in [0,\overline{p}]$, any subderivative $r$ of $\overline{V}_i$ at $p$, and take $\alpha_i = 1/r$.  Then $p = p_i(\alpha_i v_i, b_{-i})$ and $V_i(\alpha_i v_i, b_{-i}) = \overline{V}_i(p)$.
\end{claim}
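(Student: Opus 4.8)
The plan is to prove the claim in two movements: a ``forward'' direction showing that every uniform scaling bid lands exactly on the concave envelope with a predictable supporting slope, and then an ``inversion'' that, given a target payment $p$ and a subderivative $r$, identifies $\alpha_i = 1/r$ as the scaling factor reaching it.

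For the forward direction I would exploit truthfulness for quasi-linear buyers. For any $\alpha > 0$, a quasi-linear buyer with valuation $\alpha v_i$ finds it optimal to report $\alpha v_i$, so the bid $\alpha v_i$ maximizes $\alpha v_i(x_i(\cdot, b_{-i})) - p_i(\cdot, b_{-i})$ over all reports. Dividing by $\alpha$, the resulting point $(p(\alpha), V(\alpha)) := (p_i(\alpha v_i, b_{-i}),\, v_i(x_i(\alpha v_i, b_{-i})))$ maximizes the linear functional $v' - \tfrac{1}{\alpha} p'$ over the entire cloud of (payment, value) pairs. Consequently the line through $(p(\alpha), V(\alpha))$ of slope $1/\alpha$ lies weakly above every cloud point; being a concave majorant of the cloud it lies weakly above $\overline{V}_i$ as well, yet it passes through the cloud point $(p(\alpha), V(\alpha))$, which lies weakly below $\overline{V}_i$. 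Squeezing these inequalities gives $V(\alpha) = \overline{V}_i(p(\alpha))$, so the scaling bid lands on the envelope, and $1/\alpha$ is a subderivative of $\overline{V}_i$ at $p(\alpha)$.

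For the inversion, fix $p$ and a subderivative $r$, and set $\alpha = 1/r$. The forward direction shows that $p(\alpha)$ also admits $r = 1/\alpha$ as a subderivative, so both $p$ and $p(\alpha)$ carry supporting slope $r$. I would then argue $p(\alpha) = p$. By the strict-monotonicity assumption the map $\alpha \mapsto p(\alpha)$ is continuous and strictly increasing, hence a bijection onto its range; combined with the forward fact that each $\alpha$ contributes supporting slope exactly $1/\alpha$, the correspondence between payments and supporting slopes along the envelope becomes strictly monotone. This rules out flat segments of $\overline{V}_i$ (a nondegenerate interval of payments sharing a single slope $r$ would force $1/\alpha = r$ for a whole interval of distinct $\alpha$, contradicting injectivity of $\alpha \mapsto 1/\alpha$) and, by an analogous one-sided-derivative squeeze, kinks as well. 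Hence the slope $r$ is attained at a unique payment, forcing $p(\alpha) = p$. The value statement then follows immediately from the forward direction, since $V_i(\alpha v_i, b_{-i}) = V(\alpha) = \overline{V}_i(p(\alpha)) = \overline{V}_i(p)$.

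I expect the inversion to be the main obstacle: the delicate point is guaranteeing that the scaling bid lands at exactly the prescribed payment $p$ rather than at some other payment sharing the same supporting slope. This is precisely where the continuity and strict-monotonicity assumptions earn their keep, converting the a priori set-valued relationship between payments and slopes into a genuine bijection so that naming a subderivative pins down the payment uniquely. I would therefore spell out the no-flat-segment (and no-kink) argument carefully, via the one-sided derivatives of the concave envelope and the monotonicity forced by concavity, since it is the one place the three simplifying assumptions are genuinely invoked; the forward direction, by contrast, uses only truthfulness and the definition of the upper concave envelope.
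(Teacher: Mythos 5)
Your proof is correct and follows essentially the same route as the paper's: your forward direction (dividing the truthfulness objective of a quasi-linear buyer with valuation $\alpha v_i$ by $\alpha$) is exactly the paper's rescaling argument via the linear outside-option curve $\widetilde{C}_i(p) = r\,p$, and your exclusion of flat segments is the contrapositive of the paper's observation that an interval $[\beta,\gamma]$ of payments sharing the slope $r$ would force a jump in $\alpha \mapsto p_i(\alpha v_i, b_{-i})$ at $\alpha = 1/r$, contradicting the continuity assumption. One minor note: ruling out kinks is unnecessary for this claim (only flat segments matter, since the statement allows $r$ to be \emph{any} subderivative at $p$); kink-freeness is the content of the subsequent corollary on differentiability, which is where strict monotonicity --- not actually needed for your flat-segment argument, continuity suffices --- earns its keep.
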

\begin{proof}
%
Consider an alternative outside option value curve $\widetilde{C}_i$ given by $\widetilde{C}_i(p) = r \cdot p$ for all $p$.  That is, $\widetilde{C}_i$ is a constant value of $r$ per unit of money spent.  Strategically, a buyer with valuation $v_i$ and outside option curve $\widetilde{C}_i$ behaves equivalently to a quasi-linear buyer with valuation $v_i / r$.  That is, we can re-scale the buyer's valuation to express it in units of money rather than units of utility.  But now, since the underlying mechanism is assumed to be truthful for quasi-linear buyers,
we know that such a buyer must maximize utility by declaring valuation $v_i / r$.  In other words, the uniform scaling bid $v_i / r$ must select a utility-maximizing point on curve $\overline{V}_i(\cdot)$, say $(p, \overline{V}_i(p))$.  As this point is utility-maximizing, $\overline{V}_i(p)$ must have a subderivative $r$ (as this is the derivative of $\widetilde{C}_i$ everywhere).  

We next claim that there is a unique $p \in [0,\overline{p}]$ at which $r$ is a subderivative of $\overline{V}_i$.  Suppose not.  This means (by concavity) that the set of all such points form a contiguous non-empty interval, say $[\beta,\gamma]$.
Changing from declaring $\alpha_i = 1/r-\delta$ to $\alpha_i = 1/r+\delta$, for any $\delta > 0$, then leads to a positive jump in the value $\overline{V}_i$, as it results in changing a payment of at most $\beta$ to a payment of at least $\gamma > \beta$ and $\overline{V}_i$ is continuous and increasing.  But this contradicts the assumption that the allocation and payment rules of the mechanism are continuous. 

We conclude that $p$ is the unique point in $[0,\overline{p}]$ at which $r$ is a subderivative of $\overline{V}_i$ at $p$.  
As the uniform scaling bid $\alpha_i = 1/r$ selects this point, we must have $p = p_i(\alpha_i v_i, b_{-i})$ and $V_i(\alpha_i v_i, b_{-i}) = \overline{V}_i(p)$ as claimed.
\end{proof}

\begin{corollary}
\label{cor:strict.incr.payment}
$\overline{V}_i$ is strictly concave on $[0,\overline{p}]$.
Also, if the payment $p_i(\alpha_i v_i, b_{-i})$ is strictly increasing in $\alpha_i$ then $\overline{V}_i$ is 
differentiable on $[0,\overline{p}]$.
\end{corollary}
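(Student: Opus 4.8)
The plan is to derive both parts directly from the correspondence established in the proof of Claim~\ref{claim:value.per.spend}, which ties each subderivative $r$ of $\overline{V}_i$ at a point $p$ to the scaling bid $\alpha_i = 1/r$ and to the induced payment $p = p_i(\alpha_i v_i, b_{-i})$. The key structural fact to extract from that proof is its \emph{uniqueness} statement: for each admissible slope $r$, there is exactly one $p \in [0,\overline{p}]$ at which $r$ is a subderivative of $\overline{V}_i$. Strict concavity will follow by ruling out affine pieces, and differentiability will follow by ruling out kinks, each obstruction being incompatible with one of the hypotheses.

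For strict concavity, I would argue by contradiction. Since $\overline{V}_i$ is already known to be concave, a failure of strict concavity is equivalent to the existence of a nondegenerate interval $[\beta,\gamma]$ with $\beta < \gamma$ on which $\overline{V}_i$ is affine, say with slope $r$. But then $r$ is a subderivative of $\overline{V}_i$ at every point of $[\beta,\gamma]$, so in particular $r$ is a subderivative at more than one point. This directly contradicts the uniqueness established inside the proof of Claim~\ref{claim:value.per.spend}; equivalently, one can rerun the argument there, since moving the scaling factor from $1/r - \delta$ to $1/r + \delta$ would force a discontinuous jump in payment, contradicting the continuity of the allocation and payment rules. Hence $\overline{V}_i$ has no affine piece and is strictly concave on $[0,\overline{p}]$.

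For differentiability, now assume in addition that $p_i(\alpha_i v_i, b_{-i})$ is strictly increasing in $\alpha_i$. A concave function can fail to be differentiable only at a kink point $p_0$, where the left derivative strictly exceeds the right derivative and the set of subderivatives at $p_0$ is a nondegenerate interval $[r_-, r_+]$ with $r_- < r_+$. By Claim~\ref{claim:value.per.spend}, for every $r \in [r_-, r_+]$ the scaling factor $\alpha_i = 1/r$ induces payment exactly $p_0$. As $r$ ranges over $[r_-, r_+]$, the value $1/r$ ranges over the nondegenerate interval $[1/r_+, 1/r_-]$, so a whole continuum of distinct scaling factors $\alpha_i$ yields the identical payment $p_0$. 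This contradicts the assumed strict monotonicity of $p_i(\alpha_i v_i, b_{-i})$ in $\alpha_i$. Therefore $\overline{V}_i$ has no kinks and is differentiable on $[0,\overline{p}]$.

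The routine part is the convex-analysis bookkeeping; the one place to be careful is the orientation of the subdifferential for a concave (rather than convex) function, so that at a kink the subderivatives really do fill a nondegenerate interval and map under $r \mapsto 1/r$ to a nondegenerate interval of scaling factors. The main conceptual point, and the step I expect to carry the real weight, is the differentiability direction, where the strict-monotonicity hypothesis is exactly what is needed to forbid a single payment level from being shared by an interval of scaling factors; the strict-concavity direction is essentially already contained in the proof of the claim.
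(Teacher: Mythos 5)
Your proof is correct and follows essentially the same route as the paper: both parts are read off from the uniqueness statement in Claim~\ref{claim:value.per.spend} (a subderivative $r$ of $\overline{V}_i$ occurs only at $p = p_i(v_i/r, b_{-i})$), with strict concavity obtained by ruling out affine pieces and differentiability obtained because a kink would force an interval of distinct scaling factors $1/r$ to share the payment $p_0$, contradicting strict monotonicity. The paper's version is just terser, noting that two distinct subderivatives at one point already violate strict monotonicity, without invoking the full continuum.
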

\begin{proof}
By Claim~\ref{claim:value.per.spend}, $\overline{V}_i$ can have subderivative $r$ only at $p = p_i(v_i / r, b_{-i})$, so $\overline{V}_i$ is strictly concave.
If $p_i(\alpha_i v_i, b_{-i})$ is strictly increasing then 
$\alpha_i < \alpha'_i$ implies $p_i(\alpha_i v_i, b_{-i}) < p_i(\alpha'_i v_i, b_{-i})$.
Claim~\ref{claim:value.per.spend} therefore implies that $\overline{V}_i$ has a unique subderivative, and is therefore differentiable, at each $p \in [0,\overline{p}]$.
\end{proof}


We conclude that there is a unique $p \in [0,\overline{p}]$ that maximizes $\overline{V}_i(p) - C_i(p)$, and for $r = R_i(p)$ there exists a bid $b_i = \alpha_i \cdot v_i$ such that $r = 1/\alpha_i$ and $p = p_i(b)$.  Any such $b_i$ is ROI-optimal for agent $i$ by definition, 
and is utility-maximizing from the choice of $p$.  Moreover, if the payment of agent $i$ is strictly increasing in $\alpha_i$ then $r$ is the unique subderivative of $\overline{V}_i$ at $p$, and therefore the only utility-maximizing choice of $\alpha_i$ is $1/r$, which is ROI-optimal.
Finally, since we assume that the derivative of $C_i$ is at least $1$ everywhere, we must have $r \geq 1$ at any utility-maximizing payment $p$, which implies that $\alpha_i = 1/r \leq 1$.  This completes the proof of Theorem~\ref{thm:best.response.general}, under our simplifying  technical assumptions.

\subsection{Equilibrium Existence}\label{sec:eq-exist}

A direct corollary of 
Theorem~\ref{thm:best.response.general}
is that if a bid profile $b$ is simultaneously ROI-optimal for all agents 
then it is utility-maximizing for all agents and therefore forms a pure Nash equilibrium.  
We'll say that a pure Nash equilibrium is ROI-optimal if all agents are using ROI-optimal strategies.  Our next result is that at least one such ROI-optimal pure Nash equilibrium is guaranteed to exist.  That is, there exists a pure Nash equilibrium in which all agents use ROI-optimal uniform scaling strategies.  Note that although we are focusing attention on uniform scaling strategies, our definition of equilibrium is with respect to the broader space of bidding strategies in which each agent can declare any value for any item.  

\begin{theorem}
\label{thm:equil.proper.general}
Fix any mechanism that is truthful for quasi-linear buyers.
Then for any given market instance there exists an ROI-optimal pure Nash equilibrium in uniform scaling strategies under appropriate tie-breaking.
%
\end{theorem}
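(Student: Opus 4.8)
The plan is to cast the existence of an ROI-optimal equilibrium as a fixed point of a best-response correspondence over the space of scaling factors, to settle the case of continuous allocation and payment rules with a standard fixed-point theorem, and then to recover the general (discontinuous) case by a limiting argument that simultaneously pins down the tie-breaking rule. First I would reduce the strategy space. By Theorem~\ref{thm:best.response.general}, for every agent $i$ and every profile $b_{-i}$ of opposing bids there is an ROI-optimal, utility-maximizing best response of the form $\alpha_i v_i$ with $\alpha_i\in[0,1]$. It therefore suffices to search for equilibria inside the compact convex cube $[0,1]^n$ of scaling profiles. I would define the correspondence $\Phi\colon[0,1]^n\rightrightarrows[0,1]^n$ by $\Phi(\alpha)=\prod_i\Phi_i(\alpha_{-i})$, where $\Phi_i(\alpha_{-i})\subseteq[0,1]$ is the set of ROI-optimal scaling factors of agent $i$ when the others bid $\alpha_{-i}\cdot v_{-i}$; by Theorem~\ref{thm:best.response.general} any $\alpha$ with $\alpha\in\Phi(\alpha)$ is an ROI-optimal pure Nash equilibrium, so the goal is to produce a fixed point of $\Phi$.

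Next I would settle the continuous case, which also serves as the engine for the general one. Assume the interim value $v_i(x_i(\alpha\cdot v))$ and payment $p_i(\alpha\cdot v)$ are continuous in $\alpha$. Then $\Phi_i$ is nonempty by Theorem~\ref{thm:best.response.general}, and it is convex-valued: the attainable utility $\overline{V}_i(p)-C_i(p)$ is concave in $p$ by Corollary~\ref{cor:strict.incr.payment}, so its maximizing payments form an interval, and since the induced payment is monotone in $\alpha_i$ the corresponding ROI-optimal scaling factors form an interval as well. Finally $\Phi$ has closed graph: if $\alpha^{(t)}\to\alpha$ with ROI-optimal responses $\alpha_i^{(t)}\to\beta_i$, continuity of value and payment in $\alpha$ makes $\beta_i$ utility-maximizing, and since each $1/\alpha_i^{(t)}$ is a subderivative of $C_i$ at the (converging) payment, closedness of the graph of $\partial C_i$ for the convex function $C_i$ forces $1/\beta_i$ to be a subderivative of $C_i$ at the limiting payment, i.e.\ $\beta_i$ is ROI-optimal. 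Kakutani's fixed-point theorem then yields $\alpha\in\Phi(\alpha)$, proving the theorem whenever the rules are continuous.

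To remove the continuity assumption and handle ties, I would approximate the given mechanism $M$ by a sequence of continuous mechanisms $M^{(k)}$ that remain truthful for quasi-linear buyers, for instance by convolving the bid profile with a vanishing independent perturbation so that the interim allocation and payment rules of $M^{(k)}$ become continuous while each $M^{(k)}$ inherits dominant-strategy truthfulness in expectation from $M$. The previous step gives an ROI-optimal equilibrium $\alpha^{(k)}\in[0,1]^n$ for each $M^{(k)}$; by compactness of $[0,1]^n$ I would pass to a subsequence with $\alpha^{(k)}\to\alpha^\ast$. I would then define the tie-breaking rule of $M$ at the relevant profiles so that its allocation and payment agree with the limits $\lim_k x^{(k)}(\alpha^{(k)}\cdot v)$ and $\lim_k p^{(k)}(\alpha^{(k)}\cdot v)$, which makes $M$'s value and payment continuous along the approximating sequence at $\alpha^\ast$. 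ROI-optimality then survives the limit exactly as in the closed-graph argument above, and the maximized utilities $\overline{V}_i^{(k)}-C_i$ converge to those of $M$, so no agent has a profitable deviation at $\alpha^\ast$; hence $\alpha^\ast$ is an ROI-optimal equilibrium of $M$ under this tie-breaking.

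I expect this last step to be the crux. The delicate points are (i) engineering the approximations $M^{(k)}$ so that they are genuinely continuous, converge to $M$, and still satisfy the truthfulness hypothesis needed to invoke Theorem~\ref{thm:best.response.general}; and (ii) verifying that the tie-breaking rule extracted from the limiting allocation is consistent, namely that it is realizable by an actual allocation rule and that a deviation which is unprofitable in every $M^{(k)}$ remains unprofitable in $M$. The latter requires the convergence of each agent's value–payment frontier $\overline{V}_i^{(k)}$ to be strong enough to transfer best-response optimality to the limit. By contrast, the reduction to a fixed point and the Kakutani argument in the continuous case are routine once $\Phi$ is shown to be nonempty-valued, convex-valued, and closed-graphed.
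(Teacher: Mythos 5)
Your overall architecture — reduce to scaling profiles, fixed point in the continuous case, then perturb-and-take-limits with endogenized tie-breaking in the general case — is the same as the paper's. In the continuous case the paper applies Debreu--Glicksberg--Fan after showing $U_i$ is quasi-concave in $\alpha_i$ (Section~\ref{sec:eq-exist}), while you apply Kakutani to the ROI-optimal best-response correspondence; this variant works, and your convex-valuedness claim can in fact be made rigorous: by weak monotonicity of payments in $\alpha_i$ (a consequence of truthfulness) any two ROI-optimal scalars $\alpha_i<\alpha_i'$ must induce the \emph{same} payment $P$, and then all intermediate $1/\alpha_i''$ lie in the subdifferential interval $\partial C_i(P)$. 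One caveat even here: with hard budgets ($B_i<\infty$, $C_i(p)=\infty$ above $B_i$) payoffs are not continuous, so the continuous case as you state it already silently assumes away a feature of the model that the paper must handle by perturbation.

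The genuine gap is in your construction of the approximating mechanisms $M^{(k)}$ and in the realizability of the limit, which you flag as ``delicate'' but do not resolve — and these are the substantive content of the paper's Appendix~\ref{sec:equil.exact.ties.details} proof. First, convolving the bid profile with vanishing additive noise does \emph{not} preserve truthfulness: for each noise realization the optimal effective report is $v_i$, but the agent cannot condition on the realization, so truthful reporting need not maximize expected utility, and your invocation of Theorem~\ref{thm:best.response.general} for $M^{(k)}$ is then unjustified. The paper instead perturbs each agent's \emph{valuation} multiplicatively by $\gamma_i$ drawn from $[1-\delta,1]$, so the effective bid and the effective value are scaled by the same factor for every realization; truthfulness is inherited exactly because $V$ is closed under scalar multiplication. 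The paper additionally softens hard budgets (extending $C_i$ linearly with slope $1/\delta$ above $B_i$) and adds a per-agent dummy good with a random reserve to make payments strictly increasing in $\alpha_i$ (Lemma~\ref{lem:pure.equil.ties}). Second, your step ``define the tie-breaking rule of $M$ so that it agrees with the limits'' requires that the limit $(x^*,p^*)$ actually lie in the convex closure $\overline{X}(\alpha^*\cdot v)$ of $M$'s outcomes; for an arbitrary continuous approximation of $M$ this can fail (the limit need not be feasible for $M$ at all). It holds in the paper precisely because the perturbed mechanism's interim outcome is, by construction, a convex combination of $M$'s outcomes at uniform scaling bids with scalars in $[(1-\delta)\alpha_i,\alpha_i]$, so the limit lies in the closure by definition. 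Finally, your concern about transferring best-response optimality via convergence of the frontiers $\overline{V}_i^{(k)}$ is avoidable: the paper never proves frontier convergence, but instead shows ROI-optimality survives the limit using one-sided semicontinuity of the perturbed derivatives $R_i^{(\delta)\pm}$, and then concludes via the characterization that ROI-optimality implies utility-maximization under the limit tie-breaking (Corollary~\ref{cor::ROI-implies-equil.ties}).
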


The full proof of Theorem~\ref{thm:equil.proper.general} appears in Appendix~\ref{sec:equil.exact.ties.details}.  As we did for Theorem~\ref{thm:best.response.general}, we will now present a simplified proof of Theorem~\ref{thm:equil.proper.general} that holds under the same technical assumptions: each $C_i$ is everywhere continuously differentiable, the mechanism has continuous allocation and payment rules, and the payment rule is strictly increasing.
The proof follows from standard fixed-point arguments, by showing that utilities are quasi-concave in the scaling parameter.

\begin{proof}
We use a theorem of Debreu, Glicksberg, and Fan for existence of equilibria in infinite games: there exists a pure Nash equilibrium for any quasi-concave game with compact and convex strategy space and continuous payoffs.  We verify each of these conditions for our game, restricted to uniform scaling strategies.  The strategy space of each buyer is $[0,1]$, which is compact and convex. 
Recall that we assume our mechanism has continuous allocation and payment rules, and moreover 
by assumption
the cost function $C_i$ is continuous in payment for each agent.  
Payoffs are therefore continuous.  

We now show that $U_i(\alpha_i, \alpha_{-i})$ is quasi-concave in $\alpha_i$. Fix $\alpha_{-i}$, and 
%
recall from Claim~\ref{claim:value.per.spend} that $V_i(\alpha_i, \alpha_{-i}) = \overline{V}_i(p_i(\alpha_i, \alpha_{-i}))$.
Then 
\begin{equation}
\label{eq.alpha.p}
U_i(\alpha_i, \alpha_{-i}) = \overline{V}_i(p_i(\alpha_i, \alpha_{-i})) - C_i(p_i(\alpha_i, \alpha_{-i})).
\end{equation}
Recall that $C_i(p)$ is strictly increasing and convex,
and the curve $\overline{V}_i$ is concave by construction.  As $p_i(\alpha_i)$ is increasing in $\alpha_i$, $U_i$ is quasi-concave as required, so there exists a pure equilibrium of the game in which agents declare scaling factors.

It remains to show that any pure Nash equilibrium in uniform scaling strategies must be ROI-optimal, and that $\alpha$ is an equilibrium in the more general bidding game.  If each agent faces a strictly increasing pricing rule, each agent's utility-maximizing choice of $\alpha_i$ is unique and ROI-optimal 
as argued in Section~\ref{sec:best.response}.
As the choice of $\alpha_i$ is unique, it must correspond to the value of $\alpha_i$ from Theorem~\ref{thm:best.response.general}, so in particular it is also utility-maximizing over all possible bids of agent $i$ (not just uniform scaling strategies).  Thus $\alpha$ is a pure Nash equilibrium even with respect to non-uniform-scaling strategies.
\end{proof}

\paragraph{A note about tie-breaking.}
Both Theorem~\ref{thm:best.response.general} and Theorem~\ref{thm:equil.proper.general} hold under appropriate tie-breaking.  The simplified proofs provided in this section circumvent tie-breaking issues by imposing extra technical assumptions that ensure that ties occur with zero probability.  The full proofs require {careful} handling of ties.  We make this formal in Appendix~\ref{sec:equil.exact.ties.details}.  Roughly speaking, whenever the allocation rule of the mechanism is discontinuous, we refer to the outcome chosen at the point of discontinuity as being selected by a tie-breaking rule.  Different tie-breaking rules can select different outcomes from the convex closure of the limit points at the discontinuity.  Theorem~\ref{thm:equil.proper.general} shows that an ROI-optimal Nash equilibrium exists for \emph{some} choice of tie-breaking rule, and this tie-breaking rule might depend on the market instance.\footnote{Such ``endogenized tie-breaking'' is a common requirement for the existence of pure Nash equilibrium in mechanisms even for quasi-linear bidders.  See footnote~\ref{foot::NEexist} for an illustrating example.
}

\section{Properties of Equilibria: Additive and Divisible Goods}
\label{sec:equil.properties}

Now that we have established the existence of ROI-optimal equilibria,\footnote{As noted in Section~\ref{sec:equil.exact}, the existence result depends on the nature of tie-breaking.  The results in this section apply to any ROI-optimal equilibrium that arises under any choice of tie-breaking rule.  See Appendix~\ref{sec:equil.exact.ties.details} for a more thorough treatment of tie-breaking.} we can prove approximate welfare and revenue guarantees at any such equilibrium in the central setting of selling multiple divisible goods to buyers with additive values, using simultaneous second-price auctions.
In this model, an allocation to buyer $i$ is given by $x_i = (x_{ij})$ where $x_{ij} \in [0,1]$ is a fraction of good $j$.  The valuation of buyer $i$ for allocation $x_i$ is $v_i(x_i) = \sum_j v_{ij} x_{ij}$, where $v_{ij} \geq 0$ is buyer $i$'s value per unit of good $j$.  Recall that under additive valuations, the simultaneous second-price auction is equivalent to the VCG mechanism and is truthful for quasi-linear buyers.

%

\subsection{Transferable Welfare at Equilibrium}

We first establish a $2$-approximation for transferable welfare at any ROI-optimal pure Nash equilibrium of the simultaneous second-price auction with no reserves.  
\bledit{Recall from Examples~\ref{example:basic2} and~\ref{example:basic3} that multiple ROI-optimal equilibria may exist; our bounds will apply in the worst case over all such equilibria.}
Recall that buyer $i$'s \emph{willingness to pay} for an  allocation ${x}_i$ is ${W_i}({x}_i) = C_i^{-1}(v_i({x}_i))$, and the \emph{transferable welfare} of allocation ${x}$ is $W({x}) = \sum_i {W_i}({x}_i)$.
We will compare the transferable welfare achieved under allocation $x$ to the
transferable welfare {$W(y)=\sum_i W_i(y_i)$}
achieved under an arbitrary allocation $y$.


%

\begin{theorem}
\label{thm:W-approx}
Suppose $\alpha$ is an ROI-optimal pure Nash equilibrium of the second-price auction (without reserves) with allocation ${x}$.
Then the transferable welfare of allocation $x$ is at least half the maximum transferable welfare: $W({x}) \geq \frac{1}{2} \max_{y} {W(y)} $. 
\end{theorem}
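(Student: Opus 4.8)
The plan is to fix an arbitrary allocation $y$ and show $W(y) \le 2\,W(x)$, which immediately gives $W(x) \ge \tfrac12 \max_y W(y)$. Let $q_j$ denote the equilibrium per-unit price of good $j$, i.e.\ the second-highest effective bid $\alpha_i v_{ij}$ on good $j$, so that a winner of $j$ pays $q_j$ per unit and buyer $i$'s total payment is $P_i = \sum_j q_j x_{ij}$. The heart of the argument is a single per-buyer inequality,
\[
W_i(y_i) \;\le\; W_i(x_i) + \sum_j q_j\, y_{ij},
\]
which I then sum over $i$ and combine with two facts: that total payments equal total prices, and that each buyer's willingness to pay for her equilibrium allocation is at least her payment.

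First I would extract the structural consequences of the equilibrium. Classify the goods from buyer $i$'s viewpoint by the sign of $\alpha_i v_{ij} - q_j$: on goods with $\alpha_i v_{ij} > q_j$ she is the unique highest bidder and wins fully ($x_{ij}=1$); on goods with $\alpha_i v_{ij} < q_j$ she loses ($x_{ij}=0$); and on tied goods $v_{ij} = q_j/\alpha_i$. From this, on every good she wins or ties we have $v_{ij} \ge q_j/\alpha_i$, giving $v_i(x_i) \ge \tfrac{1}{\alpha_i} P_i \ge C_i(P_i)$ (the last step because $C_i$ is convex with $C_i(0)=0$ and $1/\alpha_i$ is a subderivative of $C_i$ at $P_i$), hence $W_i(x_i) = C_i^{-1}(v_i(x_i)) \ge P_i$. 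For the target allocation I split $v_i(y_i)=\sum_j v_{ij} y_{ij}$ along the same three classes: on won goods $v_{ij}y_{ij} \le v_{ij} = v_{ij}x_{ij}$ since $y_{ij}\le 1$, while on tied or lost goods $v_{ij} \le q_j/\alpha_i$ routes the contribution into a price term. This yields $v_i(y_i) \le v_i(x_i) + \tfrac{1}{\alpha_i}\sum_j q_j y_{ij}$.

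Next I would apply $C_i^{-1}$ to this bound and exploit its concavity. Writing $S_i = \sum_j q_j y_{ij}$, concavity and monotonicity give $C_i^{-1}\big(v_i(x_i) + \tfrac{1}{\alpha_i} S_i\big) \le W_i(x_i) + s_i\cdot \tfrac{1}{\alpha_i} S_i$, where $s_i$ is a superderivative of $C_i^{-1}$ at $v_i(x_i)$. The crucial cancellation is that $W_i(x_i) \ge P_i$ together with convexity of $C_i$ forces the marginal cost of money at $W_i(x_i)$ to be at least the subderivative $1/\alpha_i$ at $P_i$, so $s_i \le \alpha_i$ and the factor $s_i/\alpha_i \le 1$ disappears, establishing the per-buyer inequality. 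Summing over buyers and using $\sum_i y_{ij} \le 1$ gives $W(y) \le W(x) + \sum_j q_j \sum_i y_{ij} \le W(x) + \sum_j q_j$. Finally, in a second-price auction without reserves each good is fully sold at price $q_j$, so $\sum_j q_j = \sum_i P_i$, and $\sum_i P_i \le \sum_i W_i(x_i) = W(x)$; combining closes the bound at $W(y) \le 2\,W(x)$.

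I expect the main obstacle to be the careful bookkeeping around ties and non-differentiability of the $C_i$. Ties matter because a good may be split, so the naive claim ``won goods have $x_{ij}=1$'' must be replaced by the three-way classification above, routing tied goods into the price term rather than the value term. Relatedly, when $C_i$ is only weakly convex (or hits a hard budget) I must phrase the cancellation via sub/superderivatives and verify that $1/\alpha_i$ being a subderivative of $C_i$ at $P_i$, together with $P_i \le W_i(x_i)$ and monotonicity of subderivatives, forces every superderivative of $C_i^{-1}$ at $v_i(x_i)$ to be at most $\alpha_i$. Once these two technical points are pinned down, the remaining steps are routine.
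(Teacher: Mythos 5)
Your proof is correct and follows essentially the same route as the paper's: both arguments rest on ROI-optimality giving $q_j \ge v_{ij}/R^+_i(P_i)$ on every good where buyer $i$ is not the sole winner, on $P_i \le W_i(x_i)$, and on market clearing plus $\sum_i P_i \le W(x)$ to close the factor of $2$; the paper merely linearizes $C_i$ via a difference quotient (case-splitting on whether $W_i(y_i) > W_i(x_i)$) where you linearize $C_i^{-1}$ via a superderivative, and it works with $(y_{ij}-x_{ij})^+$ where your slightly looser $y_{ij}$ together with $\sum_i y_{ij} \le 1$ also suffices. One small repair: it is not true that \emph{every} superderivative of $C_i^{-1}$ at $v_i(x_i)$ is at most $\alpha_i$ (if $W_i(x_i) = P_i$ and $C_i$ has a kink there with $R^-_i(P_i) < 1/\alpha_i$, the left derivative of $C_i^{-1}$ exceeds $\alpha_i$), but since you apply the concavity bound only for the nonnegative increment $t = S_i/\alpha_i \ge 0$, you may take $s_i$ to be the right derivative $1/R^+_i(W_i(x_i)) \le 1/R^+_i(P_i) \le \alpha_i$, and the argument goes through unchanged.
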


\begin{proof}
Write $P_i = p_i(\alpha)$ for the total spend of buyer $i$ at equilibrium $\alpha$.
We'll write $p_j$ for the second-highest bid on item $j$.  That is, the second-highest element of the set $\{v_{ij}\alpha_{i} \colon 1 \leq i \leq n\}$.

Let $y$ be an arbitrary allocation.  Choose some agent $i$, and consider two cases based on whether buyer $i$ is willing to pay more for allocation $y_i$ than allocation $x_i$.  

%

\noindent
{\bf Case 1:} $W_i(y_i) > W_i(x_i)$.

Note that if $v_i(y_i) < v_i(x_i)$ then $W_i(x_i) > W_i(y_i)$, so it must be that $v_i(y_i) \geq v_i(x_i)$.
Recalling that $W_i(x_i) = C^{-1}(v_i(x_i))$ and $W_i(y_i) = C^{-1}(v_i(y_i))$, the convexity of $C_i$ implies that
\[ \frac{v_i(y_i) - v_i(x_i)}{W_i(y_i) - W_i(x_i)} 
=\frac{C(W_i(y_i)) - C(W_i(x_i))}{W_i(y_i) - W_i(x_i)} 
\geq R^+_i(W_i(x_i)) \]
recalling that $R^+_i$ denotes the right-derivative of $C_i$.
This implies
\begin{align*}
W_i(y_i) - W_i(x_i) 
& \leq \frac{v_i(y_i) - v_i(x_i)}{R^+_i(W_i(x_i))}.
\end{align*}

Since $x_i$ is the allocation obtained at equilibrium, $P_i \leq W_i(x_i)$ (for otherwise buyer $i$ would obtain negative utility).  This implies $R^+_i(W_i(x_i)) \geq R^+_i(P_i)$, since $C_i$ is weakly convex.  Combining this with the inequality above we conclude that
\[ W_i(y_i) - W_i(x_i) \leq \frac{v_i(y_i) - v_i(x_i)}{R^+_i(P_i)}.\]

If $y_{ij} > x_{ij}$ then $x_{ij} < 1$, which means that agent $i$'s bid on item $j$ is at most the second-highest.  This means that the price on item $j$ (which is the second-highest bid in the second-price auction without reserves) is $p_j \geq v_{ij} \alpha_i$.  The ROI-optimality of $\alpha_i$ implies that $\alpha_i \geq 1/R_i^+(P_i)$, and thus $p_j \geq v_{ij} / R^+_i(P_i)$.
This justifies the final inequality in the equation below:
\begin{align*}
W_i(y_i) - W_i(x_i) \leq \frac{v_i(y_i) - v_i(x_i)}{R^+_i(P_i)}
&\leq \frac{\sum_j v_{ij}(y_{ij} - x_{ij})^+}{R^+_i(P_i)}\\
&\leq \sum_{j} p_j (y_{ij} - x_{ij})^+.
\end{align*}

\noindent
{\bf Case 2:} $W_i(y_i) \leq W_i(x_i)$. 
%
In this case $W_i(y_i) - W_i(x_i)  \leq 0 \leq \sum_{j} p_j (y_{ij} - x_{ij})^+$.

\medskip

Summing up over all $i$ in both cases, we conclude that
\begin{align*}
W(y) - W(x) \leq \sum_{i, j} p_j \cdot (y_{ij} - x_{ij})^+ \leq \sum_j p_j.
\end{align*}

Since the market clears at equilibrium (as all reserves are $0$ by assumption) every item $j$ is completely sold and is generating total revenue of $p_j$, so $\sum_j p_j$ is the total revenue generated from all buyers, $\sum_i P_i$.  Additionally, as argued above, we have $P_i \leq W_i(x_i)$ for all $i$. Therefore 
\begin{align*}
W({y}) - W({x}) \leq \sum_j p_j = \sum_i P_i \leq \sum_i W_i(x_i) = W({x})
\end{align*}
and hence $W({x}) \geq \frac{1}{2} W({y})$ as required.
\end{proof}

In Appendix~\ref{sec:lowerbound.welfare} we show that this result is tight and there exists a setting in which 
the transferable welfare of the allocation of an ROI-optimal pure Nash equilibrium is indeed only half the maximum transferable welfare, and not more than that.

\subsection{Equilibrium Revenue}
Next we establish a revenue bound for ROI-optimal equilibria.  Our benchmark will be related to the revenue achievable by sequential posted pricing.  
Given a profile of item prices and an ordering of the buyers, a \emph{sequential posted price} mechanism approaches the buyers one by one in order and allows each to purchase a utility-maximizing outcome from the remaining supply.
We show that if the seller can achieve a certain revenue using sequential posted prices (and an arbitrary order of the buyers, presumably chosen to maximize the resulting revenue), then at least half of that revenue is
achieved at any competitive market equilibrium with reserves set equal to those posted prices.  


\begin{theorem}
\label{thm:Rev-approx}
Suppose $\alpha$ is an ROI-optimal pure Nash equilibrium of the second-price auction with item reserves $(r_j)$, resulting in allocation $x$ and total revenue $Rev(\alpha)$.
Let ${y}$ be the allocation resulting from sequential posted prices with price vector $(r_j)$ and an arbitrary arrival order.  
Then $Rev(\alpha) \geq \frac{1}{2} \sum_{ij} y_{ij} r_j$.
\end{theorem}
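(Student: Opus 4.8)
The plan is to bound the posted-price revenue $\sum_{ij} r_j y_{ij}$ by twice the equilibrium revenue, by splitting the items according to whether they sell in the second-price auction. Write $q_j$ for the per-unit price of item $j$ at the equilibrium $\alpha$ (the larger of the reserve $r_j$ and the second-highest effective bid), and let $s_j = \sum_i x_{ij}$ and $t_j = \sum_i y_{ij}$ be the total quantities of item $j$ sold in the auction and under posted pricing, respectively. Since each good is awarded in its entirety to the highest bidder (splitting only among tied top bids), we have $s_j \in \{0,1\}$: let $S$ be the set of items with $s_j = 1$ (sold, so the top effective bid exceeds $r_j$) and $U$ the items with $s_j = 0$ (unsold, so every effective bid is at most $r_j$). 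The equilibrium revenue is then $Rev(\alpha) = \sum_i P_i = \sum_j q_j s_j = \sum_{j \in S} q_j$, where $P_i = p_i(\alpha)$. It therefore suffices to show that the posted-price revenue collected on $S$ and on $U$ are each at most $Rev(\alpha)$. For items in $S$ this is immediate: using $t_j \le 1$ and $q_j \ge r_j$,
\[
\sum_{j \in S} r_j t_j \;\le\; \sum_{j \in S} r_j \;\le\; \sum_{j \in S} q_j \;=\; Rev(\alpha).
\]

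The substance of the argument is the bound on $U$, and this is where I expect the main difficulty. Fix a buyer $i$ and write $\pi_i = \sum_j r_j y_{ij}$ for her total payment under posted pricing. The key claim is that if buyer $i$ purchases a positive quantity of any item $j \in U$, then $\pi_i \le P_i$. To see this, note first that since $j \in U$ is unsold, buyer $i$'s effective bid satisfies $\alpha_i v_{ij} \le r_j$; combined with ROI-optimality ($\alpha_i \ge 1/R^+_i(P_i)$) this gives $v_{ij} \le r_j R^+_i(P_i)$. On the other hand, because the posted-price bundle $y_i$ is utility-maximizing and buyer $i$ could always have purchased slightly less of item $j$, the first-order condition for buying a positive amount yields $v_{ij} \ge r_j R^-_i(\pi_i)$, where $R^-_i$ is the left derivative of $C_i$. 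Dividing both inequalities by $r_j$ gives $R^-_i(\pi_i) \le R^+_i(P_i)$, and since $C_i$ is convex (so $R_i$ is nondecreasing) this forces $\pi_i \le P_i$.

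Granting this claim, the $U$-bound follows by charging each buyer's posted-price spend on unsold items to her equilibrium payment. Writing $\pi_i^U = \sum_{j \in U} r_j y_{ij} \le \pi_i$, we have $\pi_i^U \le P_i$ for every buyer (for buyers who buy nothing from $U$ this is trivial, and otherwise it follows from the claim), so
\[
\sum_{j \in U} r_j t_j \;=\; \sum_i \pi_i^U \;\le\; \sum_i P_i \;=\; Rev(\alpha).
\]
Adding the two bounds gives $\sum_{ij} r_j y_{ij} \le 2\,Rev(\alpha)$, which is the claim. Note that the argument is insensitive to the arrival order in the posted-price mechanism: it uses only that each buyer's chosen bundle is utility-maximizing against the remaining supply and that the overall allocation respects supply ($t_j \le 1$).

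The delicate points to get right are the winner-take-all structure that makes $s_j \in \{0,1\}$ (which is what justifies the clean split between $S$ and $U$ and the estimate $\sum_{j\in S} r_j \le Rev(\alpha)$), and the handling of non-differentiability of $C_i$ together with boundary and tie cases in deriving $\pi_i \le P_i$. In particular, when $C_i$ has a flat stretch of slope exactly $v_{ij}/r_j$ the buyer is indifferent over a range of spend, and $\pi_i \le P_i$ should be enforced through the same endogenized tie-breaking that defines the ROI-optimal equilibrium; I would isolate this as the one subtle step and argue the generic (strictly convex / no-indifference) case first.
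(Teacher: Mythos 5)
Your proposal follows the same architecture as the paper's proof: the same two revenue bounds (one charging the posted-price revenue of sold items against their equilibrium prices via $q_j \geq r_j$ and $t_j \leq 1$, one charging each buyer's posted-price spend against her equilibrium payment), and the same key lemma $\pi_i \leq P_i$, derived exactly as the paper derives $P_i(y,r) \leq P_i$ from the chain $v_{ij}/R^-_i(\pi_i) \geq r_j \geq v_{ij}/R^+_i(P_i)$ (posted-price first-order condition on one side, a weakly losing bid plus ROI-optimality $\alpha_i \geq 1/R^+_i(P_i)$ on the other, then convexity of $C_i$). The one step that is wrong as stated is the claim $s_j \in \{0,1\}$, which you justify by winner-take-all and then lean on for both the identity $Rev(\alpha) = \sum_{j \in S} q_j$ and the clean $S$/$U$ split. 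In this model the goods are divisible and the equilibrium allocation is defined only up to the endogenized tie-breaking of the appendix: when the top effective bid on item $j$ exactly equals the reserve $r_j$, the closure $\overline{X}$ contains convex combinations of ``sell everything'' and ``sell nothing,'' so $x_j$ can be any value in $[0,1]$ (and tied top bids above the reserve split the good among several buyers, as in Example~\ref{example:basic3}, though there the item is still fully sold). The paper's proof is structured precisely to avoid any integrality assumption: it compares $x_j = \sum_i x_{ij}$ with $y_j$ item by item, and within items having $x_j < y_j$ compares $x_{ij}$ with $y_{ij}$ buyer by buyer, using only that $x_j < y_j \leq 1$ forces $p_j = r_j$ and that a buyer with $x_{ij} < 1$ bids at most the item's price.

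Fortunately your argument survives a one-line repair: take $S = \{j : x_j = 1\}$ and $U = \{j : x_j < 1\}$. The only property your $U$-side claim actually uses is that every effective bid on $j \in U$ is at most $r_j$, and this holds whenever $x_j < 1$, since a top bid strictly above the reserve would allocate the item fully; meanwhile the $S$-side bound needs only $t_j \leq 1$, $q_j \geq r_j$, and $Rev(\alpha) = \sum_j q_j x_j \geq \sum_{j \in S} q_j$, all of which remain true with fractional $x_j$ present. One further remark: the flat-stretch subtlety you flag at the end---deducing $\pi_i \leq P_i$ from $R^-_i(\pi_i) \leq R^+_i(P_i)$ when $C_i$ has linear segments, e.g., quasi-linear buyers indifferent at $v_{ij} = r_j$---is present in the paper's own proof, which invokes the assertion that $R^-(q_1) \leq R^+(q_2)$ implies $q_1 \leq q_2$ without further comment; so on that point you are not missing anything the paper supplies, and your instinct to resolve the indifference through consistent tie-breaking is if anything more careful than the published argument.
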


\begin{proof}
Fix any arrival order for the sequential posted prices mechanism with price vector $(r_j)$, and denote the resulting revenue by 
$Rev(y,r)=\sum_{ij} y_{ij} r_j$.
Write $P_{i}$ for the total spend of buyer $i$ in equilibrium $\alpha$. With this notation,  $Rev(\alpha)= \sum_i P_i$; our goal is to prove that $Rev(\alpha) = \sum_i P_i \geq \frac{1}{2} Rev(y,r)$.

Write $p_j$ for the second-highest 
bid on item $j$ (under bid profile $\alpha$) or $r_j$ (the reserve price for item $j$), whichever is higher.  That is, $p_j$ is the price paid for item $j$. 
Define $x_j = \sum_i x_{ij}$ to be the total quantity of item $j$ sold under allocation $x$, and similarly for $y_j$.  
Fix some $j$, and note that if $x_j \geq y_j$, then since $p_j \geq r_j$, necessarily $\sum_{i} x_{ij}p_j \geq \sum_i y_{ij}r_j$.  Similarly, if $x_j < y_j$ but $x_{ij} \geq y_{ij}$, then  $x_{ij}p_j \geq y_{ij}r_j$.

From this we obtain the following bound on the revenue at equilibrium, covering only the cases described above.

\begin{align*}
Rev(\alpha) 
&= \sum_{i,j} x_{ij}p_j\\
& \geq \sum_{j \colon x_j \geq y_j} x_j p_j + \sum_{j \colon x_j < y_j} \sum_{i \colon x_{ij} \geq y_{ij}} x_{ij} p_j \\
& \geq \sum_{j \colon x_j \geq y_j} y_j r_j + \sum_{j \colon x_j < y_j} \sum_{i \colon x_{ij} \geq y_{ij}} y_{ij} r_j
\end{align*}

Now consider some item $j$ with $x_{j} < y_j$, and some buyer $i$ for which $x_{ij} < y_{ij}$ (of which there must be at least one).  
Write $P_i(y,r)$ for the total payment of buyer $i$ under allocation $y$ and prices $(r_j)$. Since $x_j < y_j \leq 1$, not all of item $j$ is sold in allocation $x$, which means that $p_j = r_j$.  

Since $x_{ij} < 1$, agent $i$'s bid on item $j$ is either at most the second-highest {or it equals the reserve price}.

On the other hand, under sequential posted pricing with prices $(r_j)$, buyer $i$ chose to purchase $y_{ij}$ units of item $j$ (from a possibly restricted subset of items).  This provided a value of $v_{ij}$ per unit of item $j$, at a cost of $r_j$ per unit.  We must therefore have that 
$R^-_i(P_i(y,r)) \leq v_{ij} / r_j$, since the marginal cost of purchasing the item must be at most the marginal gain.

Combining inequalities, $v_{ij} / R^-_i(P_i(y,r)) \geq r_j \geq v_{ij} / R^+_i(P_i)$, and so $R^-_i(P_i(y,r))\leq R^+_i(P_i)$.  We conclude that $P_i(y, r) \leq P_i$, where we used the fact that $R^-(q_1) \leq R^+(q_2)$ implies $q_1 \leq q_2$. 

We are now ready to derive our second bound on the revenue at equilibrium.

\begin{align*}
Rev(\alpha)
= \sum_i P_i 
& \geq \sum_{i} \mathbf{1}[\exists j \colon x_j < y_j, x_{ij} < y_{ij}] P_i \\
& \geq \sum_{i} \mathbf{1}[\exists j \colon x_j < y_j, x_{ij} < y_{ij}] P_i(y,r) \\
& \geq \sum_{j \colon x_j < y_j} \sum_{i \colon x_{ij} < y_{ij}} y_{ij} r_j.
\end{align*}

Combining these two bounds yields
\begin{align*}
2 Rev(\alpha) \geq \sum_{i,j} y_{ij} r_j = Rev(y,r),
\end{align*}
as required.
\end{proof}


{Note that Theorem~\ref{thm:Rev-approx} holds pointwise for any given realization of preferences, and for any reserve prices. 
It therefore also holds in expectation in a Bayesian setting.}

In Appendix \ref{sec:lowerbound.revenue} we show that our analysis is tight. We show that there exists a setting and reserve prices in which 
	the total revenue $Rev(\alpha)$ of some ROI-optimal pure Nash equilibrium $\alpha$ with these reserves
is indeed only half the revenue achieved with sequential posted prices for some order of the players when pricing at these reserves, and not more than that.

%

\section{Application: Sponsored Search Advertising}
\label{sec:equil.spa}


To illustrate our results, we will consider a particular instance of our model motivated by online sponsored search advertising.  The seller is selling $m$ divisible goods, with the quantity of each good normalized to $1$ unit.  We think of each good as representing a day's worth of impressions on a given keyword, which we treat as a large quantity.
Each buyer $i$ has value for each impression of good $j$, that value is at most $v_{ij} \geq 0$ and varies stochastically. A random variable $\gamma_{ij} \in [0,1]$ represents an impression-specific variation in value for buyer $i$ and good $j$, yielding an impression-specific value 
	of $v_{ij} \cdot \gamma_{ij}$.
This variation captures, for example, {searcher}-specific click rate, conversion rate, intent, etc., as estimated by the platform at the time of impression {based on {searcher} attributes, which vary over the searcher population. Note that in the special case that $\gamma_{ij} = 1$ for sure for all $i$ and $j$, this reduces to the case of additive valuations over divisible goods.

We use $v_i= (v_{i1} \dotsc, v_{im})$ to denote the vector of item values for agent $i$, $v = (v_1, \dotsc, v_n)$ to denote the full profile of values, and $\gamma = (\gamma_{ij})$ to denote the matrix of modifiers. Here $v_{i}$ is buyer $i$'s private valuation, 
and $\gamma$ is a random variable in $[0,1]^{nm}$ drawn from a known distribution.  We evaluate buyer value and payments in expectation over $\gamma$.
Buyer $i$'s value per unit for good $j$ is then $v_{ij} \cdot \gamma_{ij}$, and his value is additive across goods. Note that this is equivalent to drawing a new instance of $\gamma$ for each impression and then evaluating value and payments in aggregate over a day's worth of impressions (in the limit as the number of impressions grows large).  This is natural in the context of sponsored search: we expect advertisers to evaluate the disutility of their payments over many auctions, and not for each individual keyword search.

Consider the VCG mechanism, which in this setting has a particularly simple form.
Each buyer $i$ places a bid $b_{ij}\geq 0$ on each item $j$, and items are then sold by individual per-item second price auctions.
Bids are placed before the modifiers $\gamma$ are realized.  Once bids are placed, $\gamma$ is realized and bid $b_{ij}$ is multiplied by $\gamma_{ij}$ before being entered into the auction.  That is, the effective bid of agent $i$ on item $j$ is $b_{ij} \gamma_{ij}$.\footnote{In the context of ad auctions, this corresponds to interpreting $b_{ij}$ as a bid per impression, which the platform combines with an estimate of match quality (e.g., searcher-specific variation in attention, click rate, etc.) at the time of impression.}  We remind the reader that agent $i$'s {maximal value $v_{ij}$} 
for item $j$ is likewise multiplied by $\gamma_{ij}$, so a bid {vector} 
$b_{i} = v_{i}$ reflects agent $i$'s true valuation regardless of the realization of $\gamma$.

If the VCG mechanism is used to resolve individual keyword auctions, then by Theorem~\ref{thm:equil.proper.general} there will be an ROI-optimal pure Nash equilibrium in scaling strategies.  This means that each advertiser {can} best-respond by bidding a scaled version of their true value profile.  In general this equilibrium depends on appropriate tie-breaking, but if we furthermore assume that each $\gamma_{ij}$ is drawn independently from an atomless distribution then ties will occur with probability $0$.  The allocation and payment rules of the mechanism will therefore be continuous (in expectation over $\gamma$).

By Theorem~\ref{thm:W-approx}, the transferable welfare of any such equilibrium will be at least half of the optimal transferable welfare, regardless of each advertiser's budget and/or disutility $C_i$ for spending money.  Moreover, by Theorem~\ref{thm:Rev-approx}, for any setting of reserve prices, the revenue obtained at {such} an equilibrium will be at least half of the revenue obtained in an environment where advertisers simply take their most desired quantities of keywords at the reserve prices.

\bibliographystyle{apalike}
\bibliography{a}

\appendix
\section{Examples}
\label{sec:examples}


\subsection{Utilitarian Welfare Cannot be Approximated}
\label{sec:example.util.welfare}

We'll say the utilitarian welfare of a mechanism is the total net utility generated by the mechanism.  That is, the difference between (a) the sum of utilities obtained by the mechanism (including the seller who is assumed to have quasi-linear utility), and (b) the sum of utilities obtained by the null mechanism that makes no allocation and takes no payments.

We begin by showing that, for the second-price auction, there are cases where the total net utility generated by the mechanism does not approximate the optimal net utility (over all outcomes).
Suppose there are two buyers and a single item.  Buyer $1$ has value $v_1 = 2$ and outside option for money given by $C_1(p) = p$.  Buyer $2$ has value $v_2 = 150$ and $C_2(p) = 100p$.  Given that a second-price auction is used, it is a dominant strategy for buyer $1$ to bid $2$ (their true value) and for buyer $2$ to bid $150/100 = 1.5$.  Thus buyer $1$ will win the item for a price of $1.5$.

Note that the utilitarian welfare of this outcome is $2$, as buyer $1$ obtains a value of $2$, and the utility lost by spending $1.5$ is gained by the seller.
%
On the other hand, the outcome that allocates the item to buyer $2$ and collects no payment generates a net utility of $150$.  By scaling up the value and outside option of buyer $2$ this gap can be made arbitrarily large.

This example is driven by the observation that an agent with value $v_i$ and cost curve $C_i(p) = p$ is strategically indistinguishable from one with value $v_i \cdot r$ and $C_i(p) = p \cdot r$, for any $r > 0$.  That is, those agents have the same set of preferences over outcomes and payments that might be returned by the mechanism.  Using this observation we next extend this example to one in which no mechanism can obtain a sublinear approximation 
to the optimal utilitarian welfare.

There are $n$ buyers.  Each of the buyers has value $v_i = 1$ and outside option $C_i(p) = p$, except for one buyer $i^*$ (indexed uniformly at random) who has value $v_{i^*} = R$ and $C_i(p) = R \cdot p$ for some large $R > 1$.  As the mechanism cannot distinguish buyer $i^*$ from the other buyers, the mechanism allocates $x_{i^*} \leq 1/n$.  Thus the net utility generated is at most $R/n + 1$.  On the other hand, the outcome that allocates $x_{i^*} = 1$ and charges no payments generates a net utility of $R$, for a gap of $\Omega(n)$ as $R$ grows large.

%

\subsection{Non-ROI-Optimal equilibria with Smoothed Allocation Rules}
\label{sec:example.non.ROI.optimal}

We now present an example with a uniform scaling equilibrium that is not ROI-optimal, and for which the transferable welfare does not approximate the maximum transferable welfare.  This example will fall within the model from Section~\ref{sec:equil.spa}, and holds even if values are perturbed by random variables $\gamma = (\gamma_{ij})$ drawn from a distribution that is atomless but does not have full support.  

Suppose there are two buyers and one item.  Buyer $1$ has $v_1 = 1$, and is quasi-linear up to a budget of $\delta > 0$.  We think of $\delta$ as being very small.  Buyer $2$ has value $v_2 = 1/2$ and is quasi-linear.  Suppose that each $\gamma_{ij}$ is drawn uniformly from $[1/2, 1]$.

This example admits the following non-ROI-optimal equilibrium: buyer $1$ bids $1$ and buyer $2$ bids $0$. 
Note that even after multiplying each of the bids by $\gamma_{ij}$ buyer $1$ would want to bid at least $1/2$, while buyer 2 would want to bid 0.
Thus buyer $1$ obtains the entirety of the good for free, whereas buyer $2$ is indifferent between all possible bids in $[0, 1/2]$ as all such bids result in the zero allocation  (and bidding above $1/2$ is dominated).  However, the transferable welfare of this outcome is only $\delta$, as this is the maximum amount that buyer $1$ is willing to pay for the good.  A transferable welfare of $1/2$ is achieved by allocating the good to buyer $2$.  As $\delta > 0$ is arbitrary, this gap can be made as large as desired.  

Note that this example demonstrates a  failure to approximate the transferable welfare for $\gamma$ that is atomless, yet this result does not  {hinge on the atomless property.} 
If we instead removed the perturbations by setting $\gamma_{ij} = 1$ deterministically for each $i$ and $j$, then the equilibrium described above remains an equilibrium and the gap in transferable welfare remains arbitrarily large.

\subsection{Multiplicity of Equilibria}\label{sec:multiple-NE}

We next show that there can be multiple ROI-optimal equilibria, and different equilibria can result in different transferable welfare.

Suppose there are two buyers and two items.  Values are given by $v_{11} = 2$, $v_{12} = 1$, $v_{21} = 1$, and $v_{22} = 2$.  That is, buyer $i$ values item $i$ at $2$ and the other item at $1$.  Each buyer has $C_i(p) = p$ for $p \leq 1/2$, and $C_i(p) = 1/2 + 10(p-1/2)$ for $p > 1/2$.  That is, their marginal value for money is $1$ for the first $1/2$ spent, then their marginal value for money is $10$ thereafter.  We think of $1/2$ as a soft budget limit for each player.

There is a symmetric equilibrium where each buyer applies a scaling factor of $1/2$.  In this equilibrium buyer $1$ wins all of item $1$, buyer $2$ wins all of item $2$, and each buyer exhausts their soft budget.  Each buyer obtains a value of $2$, for which they would be willing to spend up to $0.65$.  So the total transferable utility is $1.3$.

There are also two asymmetric equilibria.  Suppose buyer $1$ applies scaling factor $1/3$ for a bid profile of $(2/3, 1/3)$, and buyer $2$ applies a scaling factor of $2/3$ for a bid profile of $(2/3, 4/3)$.  Then buyer $2$ wins all of item $2$, and there is a tie on item $1$.  Assuming the tie is broken so that $x_{11} = 3/4$ and $x_{21} = 1/4$, each buyer will exhaust all of their soft budget. This is indeed an equilibrium, as neither player is willing to spend more at a marginal cost of $10$ for money.  Buyer $1$ obtains a value of $2.25$, for which they would be willing to spend $0.675$.  Buyer $2$ obtains a value of $1.5$, for which they would be willing to spend $0.6$.  So the total transferable utility is $1.275$, which is less than $1.3$.  The other asymmetric equilibrium is similar, obtained by swapping the roles of buyer $1$ and buyer $2$.

\subsection{Tightness of Approximation for Transferable Welfare and Revenue}
\label{sec:lowerbound.welfare}
\label{sec:lowerbound.revenue}

We now give an example showing that the factor $1/2$
in our transferable welfare bounds (Theorem~\ref{thm:W-approx}) and in our revenue bounds (Theorem~\ref{thm:Rev-approx}), are both tight. 
A single example will illustrate tightness for both results.

Fix some small $\epsilon > 0$.  There are two items and two buyers.  Values are given by $v_{11} = 1, v_{12} = 0, v_{21} = 1+\epsilon$, and $v_{22} = 1$.  Each buyer has a hard budget, with budgets given by $B_1 = 1$ and $B_2 = 1$.  The buyers are quasi-linear up to their budgets: $C_i(p) = p$ for $p \leq B_i$.  

First consider transferable welfare.  The optimal transferable welfare is $2$, achieved by allocating item $1$ to buyer $1$ and item $2$ to buyer $2$.  For that allocation, each agent is willing to spend their entire budget of $1$.

Suppose the agents bid with uniform scaling factors $\alpha_1 = \alpha_2 = 1$.  Then buyer $2$ wins both items outright, paying $1$ for item $1$ and $0$ for item $2$.  This is a best-response for buyer $2$, as their spend is within their budget and they are obtaining positive utility per unit of item $1$. It is also a best-response for buyer $1$, as they would not be willing to spend $1+\epsilon$ per unit of item $1$.  Thus $(\alpha_1, \alpha_2) = (1,1)$ is an equilibrium.  But the transferable welfare of the resulting allocation is $1$, as this is the most that buyer $2$ is willing to spend.  The allocation at this equilibrium therefore obtains half of the optimal transferable welfare.

Next we will consider revenue.  For each item $j$, set a reserve price of $r_j = 1$.  Suppose these are set as posted prices and offered to the buyers sequentially, first to buyer $1$ then to buyer $2$.  Then buyer $1$ would purchase item $1$, and then buyer $2$ would purchase item $2$ (as it is the only item left).  The total revenue obtained is $2$.

Next suppose the values $r_j$ are used as reserve prices in our second-price auction.  Consider the bidding profile with uniform scaling factors $\alpha_1 = \alpha_2 = 1$.  The allocation where buyer $1$ obtains no items and buyer $2$ obtains all of item $1$ is consistent with this bidding profile.  Moreover, this is an equilibrium since neither bidder wishes to reduce their bid: buyer $2$ spends their budget and is obtaining positive utility at the price for item $1$ (which is still the reserve price of $1$), and buyer $1$ is obtaining no allocation so cannot benefit by reducing their bid.  But at this equilibrium, only item $1$ is sold for a revenue of $1$. This equilibrium therefore obtains half of the revenue attainable by sequential posted pricing with prices $(1,1)$.

\section{Equilibria of the Second-Price Auction with Ties}
\label{sec:equil.exact.ties.details}


We wish to revisit Theorem~\ref{thm:best.response.general} and Theorem~\ref{thm:equil.proper.general} and prove them in full, without making any additional technical assumptions.  Our interest is in equilibria with pure strategies, which means that we must be careful about handling  ties.  For the utility of an agent can be discontinuous in their bid, so a pure equilibrium is not guaranteed to exist for arbitrary tie-breaking rules.  
Yet, we will show that there always exists a pure equilibrium of the auction game (in ROI-optimal strategies) with appropriately endogenized tie-breaking.  

Fix a mechanism $M$ with allocation rule $x(\cdot)$ and payment rule $p(\cdot)$.  The \emph{closure of $(x,p)$} is a correspondence $\overline{x}$, where $\overline{x}(v)$ is the set of all $(x^*, p^*)$ such that there exists some countable sequence $(v^i)$ with $\lim_{i \to \infty}v^i = v$, $\lim_{i \to \infty}x(v^i) = x^*$, and $\lim_{i \to \infty}p(v^i) = p^*$.  Note that $\overline{x}(v) = \{(x(v), p(v))\}$ whenever $x$ is continuous at $v$, but $\overline{x}(v)$ may contain multiple pairs if $x$ is discontinuous at $v$.  We will denote by $\overline{X}(v)$ the set of all convex combinations of elements of $\overline{x}(v)$, so that $\overline{X}$ is a convex correspondence.\footnote{Recall that the set of outcomes is assumed to be convex, for example by allowing randomization over a base set of outcomes.}

A pair of allocation and payment rules $x'$ and $p'$ such that $(x'(v), p'(v)) \in \overline{X}(v)$ for all $v$ is said to be equivalent to $(x, p)$ up to tie-breaking.  We will sometimes refer to $(x', p')$ as a tie-breaking rule for $M$ (or an allocation rule consistent with $M$ up to tie-breaking).  We'll say that two truthful mechanisms $M$ and $M'$ are \emph{equivalent up to tie-breaking} if their allocation and payment rules are equivalent up to tie-breaking.

\begin{example}
Consider a second-price auction for a single indivisible good with two buyers.  Then an allocation is a pair $(x_1, x_2)$ and a valuation profile is a pair $(v_1, v_2)$.  Suppose ties are broken in favor of buyer $1$.  Then the allocation rule is given by $x(v_1, v_2) = (1,0)$ if $v_1 \geq v_2$ and $x(v_1, v_2) = (0,1)$ if $v_1 < v_2$. Let's consider the convex closure of this allocation rule.  (We'll ignore the payment rule here to simplify notation, but payments will always follow the second-price rule.)  For any $v_1 \neq v_2$, $x$ is continuous around $(v_1, v_2)$ so $\overline{X}(v_1, v_2)$ is the singleton $\{ x(v_1, v_2) \}$.   For any $z \geq 0$, $x(z,z) = (1,0)$.  We also have $\lim_{\eps \to 0} x(z, z+\eps) = (0,1)$.  So $\overline{x}(z,z) = \{(0,1), (1,0)\}$.  Taking convex combinations, we have $\overline{X}(z,z) = \{ (p, 1-p) \colon p \in [0,1] \}$.  This naturally corresponds to all ways to break ties when both buyers bid $z$: all ways to randomize the allocation between the two buyers.  If we chose $x'$ such that $x'(v_1, v_2) = x(v_1, v_2)$ whenever $v_1 \neq v_2$ and $x'(v_1, v_2) = (0.5, 0.5)$ whenever $v_1 = v_2$, then $x'(v) \in \overline{X}(v)$ for all $v$.  This $x'$ encodes the alternative tie-breaking rule that assigns the good uniformly at random in case of a tie.
\end{example}

We will show that for any truthful mechanism $M$, there is another mechanism equivalent up to tie-breaking for which a pure Nash equilibrium exists.  Moreover, at this equilibrium all agents use ROI-optimal uniform scaling strategies, and each agent's scalar multiplier corresponds to their \emph{marginal value for money} at the equilibrium level of spending.  We discuss these refinements below.

\subsection{Best Responses and Uniform Scaling}

We will prove that for every buyer $i$ and bids of the others, there is a utility-maximizing bid $b_i$ that is ROI-optimal under some tie-breaking rule.  
%
Note that this claim by itself does \emph{not} imply the existence of a pure Nash equilibrium for some tie-breaking rule, as we may use different tie-breaking rules to establish this claim for different buyers. To prove the existence of a pure Nash equilibrium we actually need to show that all buyers are best-responding under the \emph{same} tie-breaking rule. 
Indeed, we later show (Section \ref{sec:eq-exist.ties}) that such a consistent tie-breaking rule does exist.  The following is a restatement of Theorem~\ref{thm:best.response.general} that formalizes the notion of appropriate tie-breaking.

\begin{theorem}
	\label{thm:best.response.alpha.ties}
	Fix any mechanism $M$ that is truthful for quasi-linear buyers.
	Fix any buyer $i$ with valuation $v_i$ and any declared valuations $b_{-i}$ of the other buyers.
	Then there is a uniform scaling bid $b_i = \alpha_i \cdot v_i$ with $\alpha_i \in [0,1]$ and a tie-breaking rule $(x,p)$ for $M$ such that $b_i$ is utility-maximizing for agent $i$:
	\[ b_i \in \argmax_{b_i}\{ v_i(x_i(b)) - C_i(p_i(b)) \}.\]  
Moreover, at least one such $b_i$ is ROI-optimal for agent $i$, and any ROI-optimal bid $b_i$ is utility-maximizing for agent $i$.
\end{theorem}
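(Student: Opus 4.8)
The plan is to re-run the concave-envelope argument behind Claim~\ref{claim:value.per.spend} and Theorem~\ref{thm:best.response.general}, but now (i) replacing the native allocation/payment rule by its convex closure $\overline{X}$, so that tie-breaking becomes an available degree of freedom, and (ii) allowing $C_i$ to be non-differentiable and to encode a hard budget $B_i$. As before, fix buyer $i$ and the bids $b_{-i}$, and for each uniform scaling $\alpha_i\in[0,1]$ and each outcome $(x^*,p^*)\in\overline{X}(\alpha_i v_i,b_{-i})$ record the point $(p^*,v_i(x^*))$. Let $\overline{V}_i(p)$ be the upper concave envelope of all such points. Exactly as in the simplified proof, $\overline{V}_i$ is concave, weakly increasing, continuous, and satisfies $\overline{V}_i(0)=0$, and the best utility attainable while paying $p$ is $\overline{V}_i(p)-C_i(p)$; the only change is that the envelope is built from the closure rather than from the native rule.

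The heart of the argument is the tie-robust analogue of Claim~\ref{claim:value.per.spend}: for every $p$ in the relevant domain and every subderivative $r\ge 1$ of $\overline{V}_i$ at $p$, the uniform scaling bid $\alpha_i=1/r$ realizes the envelope point $(p,\overline{V}_i(p))$ under some tie-breaking rule. I would prove this in two moves. First, the rescaling trick is unchanged: a buyer with linear cost $\widetilde{C}_i(q)=rq$ behaves like a quasi-linear buyer with valuation $v_i/r$, so truthfulness for quasi-linear buyers makes the bid $v_i/r$ maximize $v_i(x_i)-r\,p_i$ over all bids under the native rule. Since every point of $\overline{X}$ is a convex combination of limits of native points and $v-rp$ is linear, this native maximum is also the maximum of $v-rp$ over the whole closure, so $\alpha_i v_i=v_i/r$ lands on the envelope at a point where $r$ is a subderivative. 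Second, I would show that tie-breaking selects which such point we land on: the payment is weakly increasing in $\alpha_i$ (monotonicity of truthful payments), and a discontinuity — a tie at $\alpha_i=1/r$ where the payment jumps from $\beta$ to $\gamma$ — is mirrored exactly by a linear segment of slope $r$ on $\overline{V}_i$ joining the one-sided limit points. Because $\overline{X}(\tfrac{1}{r} v_i,b_{-i})$ is the full convex hull of these one-sided limits, every payment in $[\beta,\gamma]$, together with the matching envelope value, is achievable by an appropriate tie-breaking rule. Hence any target $p$ with subderivative $r$ is reachable.

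Given the claim, the rest mirrors the differentiable proof but with subderivatives. Since $\overline{V}_i$ is concave and $C_i$ convex, $\overline{V}_i(p)-C_i(p)$ is concave and attains its maximum over $p\in[0,\min\{\overline{p},B_i\}]$ at some $p^\ast$, where $\overline{p}$ is the largest attainable payment; the optimality condition is that some common value $r$ lies in the subdifferentials of both curves at $p^\ast$ (at the boundary $p^\ast=B_i$ this degenerates to the one-sided condition that the left slope of $\overline{V}_i$ is at least the left derivative of $C_i$ at $B_i$, using the convention that the right derivative of $C_i$ at $B_i$ is infinite). Taking $\alpha_i=1/r$ and the tie-breaking rule supplied by the claim, the bid $\alpha_i v_i$ yields utility $\overline{V}_i(p^\ast)-C_i(p^\ast)$, which upper-bounds the utility of every bid; hence $b_i$ is utility-maximizing. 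Because $r$ is a subderivative of $C_i$ at $p^\ast$, the bid is ROI-optimal in the sense of Definition~\ref{defn::ROIoptimal}, and since $C_i$ has slope at least $1$ we get $r\ge 1$, i.e. $\alpha_i=1/r\le 1$. For the converse, any ROI-optimal bid $\alpha_i v_i$ has $r=1/\alpha_i$ a subderivative of $C_i$ at its payment $p$; the rescaling/closure argument again forces its value onto the envelope with $r$ in the subdifferential of $\overline{V}_i$ at $p$, so $r$ lies in both subdifferentials at $p$, whence $p$ maximizes the concave $\overline{V}_i-C_i$ and the bid is utility-maximizing.

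The main obstacle is the second move of the key claim: correctly identifying the payment discontinuities of the uniform-scaling curve with linear segments of $\overline{V}_i$, and verifying that the convex closure $\overline{X}$ fills each such segment exactly, so that the utility-maximizing payment $p^\ast$ — which may fall strictly inside a flat piece — is genuinely attainable by a legitimate tie-breaking rule rather than only by randomizing over distinct bids. A secondary subtlety is the bookkeeping at the hard-budget boundary and at non-differentiable points of $C_i$, where every derivative must be replaced by one-sided subderivatives and the first-order condition becomes the subdifferential-intersection condition above; I expect this to be routine once the envelope/closure correspondence is in place.
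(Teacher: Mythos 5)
Your proposal is correct and follows essentially the same route as the paper's proof in Appendix~\ref{sec:equil.exact.ties.details}: the upper concave envelope over closure outcomes, the rescaling trick reducing to a quasi-linear buyer with valuation $v_i/r$ via truthfulness, one-sided limits of the bids $v_i/(r\pm\epsilon)$ together with convexity of $\overline{X}$ to realize every point of a flat segment by a legitimate tie-breaking rule, and subdifferential first-order conditions to handle non-differentiable $C_i$ and hard budgets. The one step you flag as the main obstacle is exactly the step the paper carries out, and your sketched resolution (convex combinations of the two one-sided limit outcomes lying in $\overline{X}(v_i/r, b_{-i})$) matches the paper's argument; your only cosmetic deviation is building the envelope from uniform-scaling bids rather than all bids, which you correctly repair by noting that truthfulness makes $v_i/r$ maximize $v - rp$ over all bids and hence over the whole closure.
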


\begin{proof}
Choose buyer $i$ and fix the bids $b_{-i}$ of the other buyers.  Write $\overline{X}$ for the correspondence of tie-breaking rules for $M$.  Consider the collection of all possible bids $b_i$ of buyer $i$ and allocation and payment rules $(x(\cdot),p(\cdot)) \in \overline{X}(b_i, b_{-i})$, each of which generates a payment and a value $(p_i(b_i, b_{-i}),v_i(x_i))$.  Take the collection of all such points, as plotted with payment on the $x$-axis and value on the $y$-axis, and define $V_i(p)$ to be the upper concave envelope of those points.  That is, $V_i(p)$ is the maximum possible value that buyer $i$ can obtain by paying $p$, possibly in expectation over some randomization.  (We will later show that each point on the concave envelope actually corresponds to a choice of bid and tie-breaking rule, so no randomization over bids is necessary.)  Note that $V_i$ is semi-differentiable, concave, weakly increasing, and $V_i(0) = 0$.  %

The maximum possible utility attainable by buyer $i$ when paying $p$ is $V_i(p) - C_i(p)$.  Since $V_i$ is concave and $C_i$ is convex, intuitively this quantity will be maximized at a value of $p$ for which the derivatives of $V_i$ and $C_i$ are equal, though this is not precise since $V_i$ and $C_i$ might be only semi-differentiable. 
More formally,
since $V_i$ is semi-differentiable its left and right derivatives exist; we let $Q_i^{-}(p)$ denote the left derivative of $V_i$ at $p$, and $Q_i^+(p)$ the right derivative.  For convenience, we set $Q_i^{-}(0) = \max\{1, Q_i^{+}(0)\}$.  We'll also write $R_i^-(p)$ and $R_i^+(p)$ for the left and right derivatives of $C_i$ at $p$.  For convenience we'll define $R_i^{-}(0) = 0$.  Since $V_i$ is concave and $C_i$ is convex, either the buyer's utility is maximized at a point $p$ such that $Q_i^-(p) \geq R_i^-(p)$ and $Q_i^+(p) \leq R_i^+(p)$ (which might be $p = 0$, recalling that $R_i^{-}(0) = 0$), or else it increases without bound as $p$ increases.  But the latter case cannot occur, since $V_i(p)$ is assumed to be bounded, whereas $C_i(p)$ increases without bound as $p \to \infty$ (since $R^+_i(p) \geq 1$ for all $p$).  So it must be that the utility is maximized at a finite value of $p$.  Take $r$ to be any value such that $Q_i^+(p) \leq r \leq Q_i^-(p)$ and $R_i^-(p) \leq r \leq R_i^+(p)$.  We can think of $r$ as a slope of curves $V_i$ and $C_i$ at $p$.  Also note that since $Q_i^{-}(0) \geq 1$, $R_i^{+}(0) \geq 1$, and $C_i$ is convex, we can assume without loss that $r \geq 1$.

In order to show that there is a utility-maximizing bid of the form $\alpha_i \cdot v_i$, it suffices to show that for any $r \geq 1$, each point on curve $V_i$ with a subderivative of $r$ corresponds to a bid of the form $\alpha_i \cdot v_i$ with $\alpha_i = 1/r$, paired with an appropriate tie-breaking rule.  
Choose some $r \geq 1$ and consider
an alternative outside option value curve $\widetilde{C}_i$ given by $\widetilde{C}_i(p) = r \cdot p$ for all $p$.  That is, $\widetilde{C}_i$ is a constant value of $r$ per unit of money spent.  Strategically, a buyer with valuation $v_i$ and outside option curve $\widetilde{C}_i$ behaves equivalently to a quasi-linear buyer with valuation $v_i / r$.  That is, we can re-scale the buyer's valuation to express it in units of money rather than units of utility.  But now, since the underlying mechanism is assumed to be truthful for quasi-linear buyers,
we know that such a buyer must maximize utility by declaring valuation $v_i / r$.  In other words, the uniform scaling bid $v_i / r$ must select the utility-maximizing point on curve $V_i(\cdot)$, which must be a point $p$ at which $V_i(p)$ has subderivative $r$ (as this is the derivative of $\widetilde{C}_i$ everywhere).

If there is a unique point on curve $V_i$ with subderivative $r$, say $(p, V_i(p))$, then we conclude that this must the point selected by the uniform scaling bid $v_i / r$.  That is, we must have $v_i(x_i(v_i / r, b_{-i})) = V_i(p)$ and $p_i(v_i/r, b_{-i}) = p$, as required.

Suppose instead that there are multiple points on curve $V_i(p)$ with the same subderivative $r \geq 1$.  By concavity, this means that they form a contiguous non-empty closed interval with prices in some range $[\beta,\gamma]$. 
We now argue that the points on $V_i$ in this interval correspond to different tie-breaking rules at input $(v_i / r, b_{-i})$.  For each sufficiently small $\epsilon > 0$, there is a point $(p^\epsilon, V_i(p^{\epsilon}))$ selected by uniform scaling bid $v_i / (r + \epsilon)$, and $V_i$ has subderivative $(r+\epsilon)$ at this point.  As $V_i$ is concave and continuous,  $p^\epsilon$ must be weakly increasing as $\epsilon \to 0$, and moreover $p^\epsilon \to \beta$ as $\epsilon \to 0$ (since $\lim_{\epsilon \to 0} Q_i^-(p^\epsilon) = \lim_{\epsilon \to 0} r+\epsilon = r$,  $Q_i$ is non-increasing, and therefore $p^\epsilon \to \rjc{\beta}$).
 The sequence of allocations and payments $(x_i( v_i / (r+\epsilon), b_{-i}), p_i( v_i / (r+\epsilon), b_{-i} ))$ as $\epsilon \to 0$ must have a convergent subsequence, say converging to some $(x^{(L)}, {\beta})$, and by definition this $(x^{(L)}, {\beta})$ lies in the convex closure $\overline{X}(v_i / r, b_{-i})$.  We conclude that the point $({\beta}, V_i({\beta}))$ is selected by input $v_i / r$ and the tie-breaking rule that returns outcome $x^{(L)}$ on input $(v_i/r, b_{-i})$.  A similar argument over the sequence of allocations $(x_i( v_i / (r-\epsilon)), b_{-i})$ as $\epsilon \to 0$ shows that point $({\gamma}, V_i({\gamma}))$ is selected by input $v_i / r$ and a different tie-breaking rule that selects some other outcome, say $x^{(H)}$.  Now choose any point in the interval {$[\beta,\gamma]$}, say $(z, V_i(z))$ for some $z = {\beta} + \lambda({\gamma}-{\beta})$ with $\lambda \in [0,1]$.  Then we can select this point with bid $v_i / r$ by using a tie-breaking rule that selects the convex combination $\lambda \cdot x^H + (1-\lambda) \cdot x^L$, which must lie in $\overline{X}(r \cdot v_i, b_{-i})$ due to convexity\footnote{{Recall that this might require randomization over outcomes, such as when goods are indivisible, but the bids can remain deterministic.}}.

We conclude that for any $p$ that maximizes $v_i(p) - C_i(p)$, and any $r$ such that $Q_i^+(p) \leq r \leq Q_i^-(p)$ and $R_i^-(p) \leq r \leq R_i^+(p)$, there exists a bid $b_i = \alpha_i \cdot v_i$ and tie-breaking rule $(x(\cdot), p(\cdot))$ for $M$ such that $r = 1/\alpha_i$ and $p = p_i(b)$.  Any such $b_i$ is ROI-optimal for agent $i$ by definition, and indeed this characterizes all ROI-optimal choices of $b_i$ and tie-breaking rule $x$ (by varying the choice of $r$), which are utility-maximizing from the choice of $p$.

Finally, since we assume that the (left and right) derivatives of $C_i$ are at least $1$ anywhere, we must have $r \geq 1$ at any utility-maximizing payment $p$, which implies that $\alpha_i = 1/r \leq 1$.
\end{proof}

A direct corollary of Theorem~\ref{thm:best.response.alpha.ties} is that if a bid profile $b$ is simultaneously ROI-optimal for all agents given some tie-breaking rule $x$, then it must be utility-maximizing for all agents and therefore forms a pure Nash equilibrium.

\begin{corollary}
\label{cor::ROI-implies-equil.ties}
Given a bid profile $b$ and tie-breaking rule $(x,p)$, if for every agent $i$ the bid $b_i$ is ROI optimal given $(x,p)$,
then $b$ is a pure Nash equilibrium for allocation and payment rules $(x,p)$.
\end{corollary}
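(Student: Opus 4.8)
The plan is to prove the equilibrium property one agent at a time: for each $i$, I will show that the ROI-optimal bid $b_i = \alpha_i\cdot v_i$ is a utility-maximizing response to $b_{-i}$ when every outcome is resolved by the \emph{fixed} tie-breaking rule $(x,p)$. Since the same rule $(x,p)$ is used for all agents at once, simultaneous best-responding immediately gives a pure Nash equilibrium for $(x,p)$. The one point requiring care relative to Theorem~\ref{thm:best.response.alpha.ties} is that the theorem only guarantees \emph{some} tie-breaking rule under which an ROI-optimal bid is optimal, whereas here $(x,p)$ is given in advance; I will therefore rerun the theorem's rescaling argument at this specific rule.

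Fix $i$ and write $r = 1/\alpha_i$ and $P_i = p_i(b)$, with payments computed under $(x,p)$. The first step reuses the central observation from the proof of Theorem~\ref{thm:best.response.alpha.ties}: a buyer with valuation $v_i$ facing the linear money cost $\widetilde{C}_i(q)=r\,q$ behaves exactly like a quasi-linear buyer with valuation $v_i/r$. Because $M$ is truthful for quasi-linear buyers and this truthfulness is preserved when ties are resolved consistently (the objective $\tfrac1r v_i(\cdot)-p_i(\cdot)$ is affine over the convex closure $\overline{X}$, so any convex combination of tie-optimal menu options is again optimal), bidding $b_i=v_i/r$ maximizes $\tfrac1r v_i(x_i)-p_i$ against $(x,p)$. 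Scaling by $r$ gives the linear best-response inequality
\[
v_i(x_i(b)) - r\,P_i \;\ge\; v_i(x_i(b_i',b_{-i})) - r\,p_i(b_i',b_{-i}) \qquad\text{for every deviation } b_i',
\]
with all allocations and payments evaluated under $(x,p)$.

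The second step converts this into the inequality for the true convex cost $C_i$. Writing $v'=v_i(x_i(b_i',b_{-i}))$ and $p'=p_i(b_i',b_{-i})$, the display says $v_i(x_i(b))-v' \ge r\,(P_i-p')$. ROI-optimality of $b_i$ means precisely that $r$ is a subderivative of $C_i$ at $P_i$, so the subgradient inequality for the convex function $C_i$ gives $C_i(P_i)-C_i(p') \le r\,(P_i-p')$. Chaining the two bounds yields $v_i(x_i(b))-v' \ge C_i(P_i)-C_i(p')$, i.e. $U_i(b)\ge U_i(b_i',b_{-i})$ under $(x,p)$ for every $b_i'$. Hence $b_i$ is a best response for each $i$ under the common rule $(x,p)$, and $b$ is a pure Nash equilibrium for $(x,p)$.

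The main obstacle is the first step: one must verify that quasi-linear truthfulness of $M$ survives passage to the specific consistent tie-breaking rule $(x,p)$ — in particular that optimality is retained not just among the limiting menu options but across all their convex combinations in $\overline{X}$. Once that affineness is established the rest is routine, since the conversion to convex $C_i$ is just the subgradient inequality and needs no further appeal to the envelope $V_i$ from the proof of Theorem~\ref{thm:best.response.alpha.ties}.
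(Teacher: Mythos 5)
Your proof is correct in substance but takes a genuinely different route from the paper's. The paper offers no standalone argument: it derives the corollary in one line from Theorem~\ref{thm:best.response.alpha.ties}, whose proof runs through the concave envelope $V_i$ of achievable (payment, value) pairs over all bids \emph{and} all consistent tie-breaking rules --- ROI-optimality places the achieved point at a maximizer of $V_i(p) - C_i(p)$, hence at a point that is utility-maximizing against every deviation under every consistent rule, and applying this per agent under the common rule $(x,p)$ immediately yields the equilibrium. You bypass the envelope entirely: the DSIC inequality for the quasi-linear proxy with valuation $v_i/r$ gives the linear best-response inequality, and the subgradient inequality for the convex $C_i$ at $P_i$ (exactly what ROI-optimality supplies, including at a hard budget $P_i = B_i$ where $C_i(p')=\infty$ for $p'>B_i$ keeps the inequality valid) converts it to a comparison of true utilities. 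This chaining is more elementary, isolates precisely where convexity and ROI-optimality enter, and correctly diagnoses the gap the one-line derivation papers over: the theorem certifies optimality only under \emph{some} tie-breaking rule, while the corollary fixes one. One caveat on your step~1: your affineness remark shows convex combinations of proxy-optimal limit points remain optimal, but $\overline{X}$ is defined via sequences of \emph{full} profiles, so a consistent rule may select limit points arising from perturbations of $b_{-i}$, for which proxy-optimality at the report $v_i/r$ is not automatic. For ties triggered by agent $i$'s own bid the step is sound --- there the proxy pays exactly its marginal value and is indifferent among all limit points (upper hemicontinuity of the argmax over the fixed menu induced by $b_{-i}$), and this covers the rules the paper actually constructs, namely limits along own-bid sequences $v_i/(r\pm\epsilon)$ --- but the fully general case is asserted rather than proved, a gap you share with, rather than introduce relative to, the paper's own treatment.
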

}

\subsection{Equilibrium Existence}\label{sec:eq-exist.ties} 


We next wish to show that there is a tie-breaking rule such that the truthful mechanism has a pure Nash equilibrium 
 in which all agents use ROI-optimal uniform scaling strategies.
 As before, our definition of equilibrium is with respect to the broader space of bidding strategies in which each agent can declare any value for any item.  
Also note that the existence of a pure Nash equilibrium is not immediate, since the strategy space is infinite and payoffs may be discontinuous in bids.
An additional challenge in proving the existence of an equilibrium is that ties need to be broken by a single tie-breaking rule that is simultaneously optimal for all buyers, even though it only allocates the available items (it never over-allocates).

%
%
Another potential challenge is that there might exist pure Nash equilibria in uniform scaling strategies that are not ROI-optimal.  For example, if there are two quasi-linear buyers and a second-price auction for a single good, with $v_1 = 1$ and $v_2 = 2$, then $(\alpha_1, \alpha_2) = (0,1)$, with its corresponding allocation $(x_1, x_2) = (0,1)$, is an equilibrium even though ${1/R^+_1(0)} = 1 > 0 = \alpha_1$.  This equilibrium is driven by indifference: buyer $1$ obtains the same utility at any declared $\alpha_1\leq 1$, and she shades her bid more than necessary in the face of this indifference.  
More generally, the only way that buyer $i$ could have $\alpha_i < 1 / R^+_i(p_i(\alpha))$ at an equilibrium is if she is indifferent between bidding $\alpha_i$ and some strictly larger bid, due to the allocation not changing, but \emph{would have wanted} to spend a positive amount of money, to obtain value $1 / \alpha_i$ per unit, if doing so were possible.  This cannot occur in an ROI-optimal equilibrium, since each buyer's bid is at least the marginal price at which they would be willing to spend additional money. Note that we side-stepped this issue in Section~\ref{sec:equil.exact} by explicitly assuming that each agent's payment is strictly increasing in $\alpha_i$, which implied that any utility-maximizing choice of $\alpha_i$ must be ROI-optimal.

We can now prove that an ROI-optimal equilibrium always exists under appropriate tie-breaking.

\begin{theorem}
\label{thm:equil.proper.ties2}
Fix any mechanism $M$ that is truthful for quasi-linear buyers.
Then for any given market instance there exists a tie-breaking rule consistent with $M$ for which there exists a ROI-optimal pure Nash equilibrium in uniform scaling strategies.
\end{theorem}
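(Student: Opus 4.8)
The plan is to obtain the desired profile $\alpha^*$ together with its tie-breaking rule as a limit of equilibria of ``nicer'' games to which the simplified existence result already applies, and then to read the tie-breaking rule off the limiting outcome. Concretely, I would approximate the given instance by a sequence of perturbed instances indexed by $k$, each engineered to satisfy the three technical assumptions under which Theorem~\ref{thm:equil.proper.general} was proved. First, replace each cost curve $C_i$ by a continuously differentiable, strictly convex $C_i^{(k)}$ with $R_i^{(k)}(0)=1$ and no hard budget (finite on all of $[0,\infty)$, with slope growing steeply beyond $B_i$ so as to emulate the budget in the limit), chosen so that $C_i^{(k)}\to C_i$ uniformly on compact sets and $R_i^{(k)}$ converges to the subderivative correspondence of $C_i$. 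Second, replace $M$ by a truthful mechanism $M_k$ with continuous allocation and payment rules whose payments are strictly increasing along each scaling ray, with $M_k\to M$ in the sense that, for any convergent sequence of bid profiles, the $M_k$-outcomes accumulate inside $\overline{X}$ of the limiting profile. The smoothing of the mechanism is obtained by perturbing bids with vanishing, independent, atomless noise (in the spirit of the $\gamma$ modifiers of Section~\ref{sec:equil.spa}), which makes ties a probability-zero event and hence renders the interim rules continuous, together with a vanishing strictly-increasing payment term to eliminate indifferences; the perturbation must be arranged to act symmetrically on reported value and effective bid so that truthfulness for quasi-linear agents is preserved.

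For each $k$, Theorem~\ref{thm:equil.proper.general} then supplies an ROI-optimal pure Nash equilibrium $\alpha^{(k)}\in[0,1]^n$ of the instance $(M_k,C^{(k)})$ in uniform scaling strategies; write $P_i^{(k)}=p_i^{(k)}(\alpha^{(k)}\cdot v)$ for the equilibrium payments and $x^{(k)}$ for the equilibrium outcome. Since $[0,1]^n$ is compact, the payments are uniformly bounded (values are bounded by our normalization), and the outcome space $X$ is compact, I would pass to a subsequence along which $\alpha^{(k)}\to\alpha^*$, $x^{(k)}(\alpha^{(k)}\cdot v)\to x^*$, and $P_i^{(k)}\to P_i^*$ for every $i$. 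By the choice of $M_k\to M$, the limiting pair $(x^*,P^*)$ lies in $\overline{X}(\alpha^*\cdot v)$ and is feasible, since feasibility is a closed condition and so the limit never over-allocates. I then define the required tie-breaking rule to realize exactly $(x^*,P^*)$ at the profile $\alpha^*\cdot v$, and to be any measurable consistent selection of $\overline{X}$ elsewhere.

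It remains to verify that $\alpha^*$ is ROI-optimal for every agent under this single tie-breaking rule, i.e.\ that $1/\alpha_i^*$ is a subderivative of $C_i$ at $P_i^*$ for each $i$. For this I would take the limit of the per-$k$ ROI-optimality identities $1/\alpha_i^{(k)}=R_i^{(k)}(P_i^{(k)})$, using $C_i^{(k)}\to C_i$ and the upper hemicontinuity of the subderivative correspondence to conclude $1/\alpha_i^*\in[R_i^-(P_i^*),R_i^+(P_i^*)]$, treating separately the boundary cases $\alpha_i^*\in\{0,1\}$ and payments approaching a budget $B_i$. The strict monotonicity of the perturbed payments guarantees that each $\alpha^{(k)}$ is genuinely ROI-optimal rather than merely utility-maximizing through indifference, and it is exactly this property whose survival in the limit rules out the spurious ``over-shaded'' non-ROI equilibria discussed just before the theorem statement. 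Once ROI-optimality of $\alpha^*$ under the common tie-breaking rule is established, Corollary~\ref{cor::ROI-implies-equil.ties} immediately yields that $\alpha^*\cdot v$ is a pure Nash equilibrium in the full bidding game, and it is ROI-optimal and in uniform scaling strategies by construction.

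The main obstacle is the coupling between the construction and the limit: producing continuous \emph{truthful} approximations $M_k$ that simultaneously converge to $M$ in the closure sense and whose equilibria have a limit equal to a single outcome in $\overline{X}(\alpha^*\cdot v)$ that is ROI-optimal for all agents at once. Applying Theorem~\ref{thm:best.response.alpha.ties} agent-by-agent would only furnish a possibly different favorable tie-breaking rule for each agent, which is precisely the difficulty the theorem must overcome; the payoff of the perturbation-and-limit scheme is that in each tie-free game $M_k$ the equilibrium outcome is automatically common to all agents and feasible, so that both simultaneity and feasibility pass to the limit. Verifying that the limiting subderivative conditions hold jointly — in particular across budget kinks and across flat (indifference) regions of both $C_i$ and the envelope $\overline{V}_i$ — and confirming that the smoothing genuinely preserves truthfulness, is where the real technical care is concentrated.
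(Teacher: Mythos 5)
Your proposal is correct and follows essentially the same route as the paper: the paper's Appendix~\ref{sec:equil.exact.ties.details} proof likewise perturbs the game (relaxing hard budgets by extending $C_i$ with steep linear slope, multiplying valuations by atomless multipliers $\gamma_i$ to make interim rules continuous, and adding a per-agent good with random reserve to make payments strictly increasing), applies the simplified existence result to get ROI-optimal equilibria $\alpha^{(\delta)}$, extracts a convergent subsequence $(\alpha^*,x^*,p^*)$, reads the tie-breaking rule off the limit point in $\overline{X}(\alpha^*\cdot v)$, and passes ROI-optimality to the limit via one-sided derivative continuity before invoking Corollary~\ref{cor::ROI-implies-equil.ties}. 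The technical obstacles you flag (preserving truthfulness under the perturbation, joint ROI-optimality in the limit across budget kinks) are exactly the points the paper's Lemma~\ref{lem:pure.equil.ties} and the subsequent $\limsup/\liminf$ argument handle.
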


Our strategy for proving Theorem~\ref{thm:equil.proper.ties2} 
is to adjust the auction game with carefully chosen small perturbations.  The adjusted allocation and payment rules will be continuous and the mechanism will remain truthful, so we can use the argument from Section~\ref{sec:eq-exist} to show that pure Nash equilibria exist for the adjusted game.  Moreover, in the limit as these perturbations vanish, the equilibria converge to the desired ROI-optimal equilibrium in the original game.

Recall that in the game implied by uniform scalings for truthful mechanism $M$, agents declare multipliers $\alpha = (\alpha_i) \in [0,1]^n$, which are used to generate scaled valuations $b_{i} = \alpha_i \cdot v_{i}$ which are then passed to mechanism $M$.  We will modify this game as follows, parameterized by some $\delta > 0$ (which we think of as being sufficiently small):
\begin{enumerate}
\item We adjust each agent's cost of money curve to relax hard budgets.  Agent $i$'s disutility for total payment $p$ will be $C_i^{(\delta)}(p)$, where
\[ C^{(\delta)}_i(p) = \begin{cases}C_i(p) & \text{ if } p \leq B_i,\\ 
C_i(B_i) + (p - B_i) \cdot \tfrac{1}{\delta} & 
\text{ if } p > B_i.\end{cases} \] 
\item For each agent $i$ we introduce an additional good.  Agent $i$ has value $\delta$ for this good, and all other agents have value $0$ for it.  Any value obtained by winning this good combines additively with values obtained from the original outcome space.  This good is subject to a random reserve price, chosen uniformly from $[0,2\delta]$, and agent $i$ does not know the value of this reserve price when it bids.  The agent wins the good if they declare a value for it that is strictly greater than the realized reserve, in which case they pay the reserve price.
\item For each agent $i$, a value $\gamma_i$ is drawn uniformly at random from $[1-\delta, 1]$.\footnote{In fact, we can choose any atomless distribution supported on $[1-\delta,1]$, but our proofs will assume uniformity for notational convenience.}
Agent $i$'s valuation is then perturbed to be $\gamma_i \cdot v_i$.  That is, the utility that agent $i$ obtains from an outcome $x_i$ is taken to be $\gamma_i \cdot v_i(x_i)$.  Moreover, agent $i$'s bid is taken to be $\gamma_i \cdot \alpha_i \cdot v_{i}$.  This perturbation applies also to the additional good added for each agent.
\end{enumerate}

\textbf{Comment} When $B_i$ is finite, to ensure that $ C^{(\delta)}_i(p)$ is convex for $p>B_i$, we need $\delta$ to be small enough to satisfy $1/\delta \geq R^-(B_i)$.
We call this the $\delta$-perturbed game.  In general, we use $u_i^{(\delta)}$, $x_i^{(\delta)}$, etc., to denote outcomes of this perturbed game.  
The motivation for these perturbations is the following lemma, which establishes
the existence of a pure Nash equilibrium in the perturbed auction game.

\begin{lemma}
\label{lem:pure.equil.ties}
For all sufficiently small positive $\delta$,
there exists a pure Nash equilibrium $\alpha^{(\delta)}$ of the $\delta$-perturbed game satisfying $\alpha_i\leq 1$ for every $i$.  This pure Nash equilibrium is ROI-optimal with respect to the perturbed curves $C_i^{(\delta)}$.
\end{lemma}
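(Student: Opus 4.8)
The plan is to verify that the $\delta$-perturbed game satisfies the three technical assumptions used in the simplified proofs of Section~\ref{sec:best.response} and Section~\ref{sec:eq-exist}---no hard budgets, continuous allocation and payment rules, and payment strictly increasing in each agent's scaling factor---and that it remains truthful for quasi-linear buyers. Once these are in place, the argument of Theorem~\ref{thm:equil.proper.general} applies with each $C_i$ replaced by $C_i^{(\delta)}$: the Debreu--Glicksberg--Fan theorem yields a pure Nash equilibrium $\alpha^{(\delta)}$ in uniform scaling strategies over the compact convex strategy space $[0,1]^n$, the strict monotonicity of payments makes each agent's utility-maximizing $\alpha_i$ unique, and that unique maximizer is ROI-optimal with respect to $C_i^{(\delta)}$ (in the subderivative sense of Definition~\ref{defn::ROIoptimal}, which accommodates the kink of $C_i^{(\delta)}$ at $B_i$).

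First I would confirm that the perturbations preserve truthfulness and that $C_i^{(\delta)}$ is a legitimate cost curve. Each per-agent special good is sold by an independent random-reserve posted-price rule, which is truthful for a quasi-linear buyer; since its value enters additively, combining it with $M$ keeps the mechanism truthful, and multiplying agent $i$'s value and bid by the common factor $\gamma_i$ leaves truthful revelation of the perturbed value $\gamma_i v_i$ intact. The curve $C_i^{(\delta)}$ is finite for all $p \geq 0$ (hence no hard budget), continuous, strictly increasing, and normalized with $C_i^{(\delta)}(0)=0$; the only point in doubt is convexity at the kink $B_i$, which holds precisely when the new slope $1/\delta$ is at least the left-derivative $R_i^-(B_i)$, i.e.\ for all sufficiently small $\delta$ as required by the lemma.

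Next I would establish continuity and strict monotonicity. For continuity, since each $\gamma_i$ is drawn independently from an atomless distribution, the effective bids $\gamma_i \alpha_i v_i$ are tied with probability zero at any fixed $\alpha$, so the discontinuities of $M$ lie on a $\gamma$-null set and the expectations $v_i(x_i^{(\delta)}(\alpha))$ and $p_i^{(\delta)}(\alpha)$ are continuous in $\alpha$. For strict monotonicity I would isolate the special good of agent $i$: its effective bid $\gamma_i\alpha_i\delta$ beats the reserve (uniform on $[0,2\delta]$, and strictly below $2\delta$ since $\gamma_i\alpha_i \leq 1$) with probability $\gamma_i\alpha_i/2$, and conditioned on winning the agent pays the reserve in expectation $\gamma_i\alpha_i\delta/2$, contributing expected payment $\E[\gamma_i^2]\,\alpha_i^2\,\delta/4$, which is strictly increasing in $\alpha_i$ on $[0,1]$. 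As the payment on the original goods of the (monotone) truthful mechanism $M$ is weakly increasing in $\alpha_i$---a consequence of the concave-envelope structure behind Claim~\ref{claim:value.per.spend}---the total payment $p_i^{(\delta)}(\alpha)$ is strictly increasing in $\alpha_i$.

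The main obstacle is this strict-monotonicity step, which is the whole reason the special-good perturbation is introduced: it rules out the indifference-driven, non-ROI-optimal equilibria described in Section~\ref{sec:eq-exist.ties} by guaranteeing that each agent's best response is a unique, ROI-optimal scaling. With all three assumptions verified, existence and ROI-optimality follow exactly as in Theorem~\ref{thm:equil.proper.general}. Finally, $\alpha_i \leq 1$ holds because the left and right derivatives of $C_i^{(\delta)}$ are at least $1$ everywhere (using $R_i(0)=1$, convexity, and $1/\delta \geq 1$), so ROI-optimality forces $1/\alpha_i = r \geq 1$.
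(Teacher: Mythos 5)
Your proposal is correct and takes essentially the same approach as the paper's proof of Lemma~\ref{lem:pure.equil.ties}: you verify truthfulness of the perturbed mechanism, convexity of $C_i^{(\delta)}$ under the condition $1/\delta \geq R_i^-(B_i)$ (exactly the paper's comment), continuity via the atomless $\gamma_i$, and strict payment monotonicity via the special good's random reserve (your explicit computation $\E[\gamma_i^2]\,\alpha_i^2\,\delta/4$ makes concrete what the paper merely asserts), and then invoke the Section~\ref{sec:eq-exist} argument. One minor phrasing caveat: for the abstract mechanism $M$ the paper justifies continuity by coupling the uniform distributions over scaled bids induced by nearby $\alpha_i$ (total-variation convergence) rather than by your ``ties occur with probability zero'' language, which is not literally defined for a general truthful mechanism, though your null-set-plus-bounded-convergence idea is repairable along the same lines.
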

\begin{proof}
We will argue that the perturbed game satisfies the technical assumptions imposed in Section~\ref{sec:eq-exist}, and thus the simplified proof of Theorem~\ref{thm:equil.proper.general} in that section implies the desired result.

We first note that the perturbed game is truthful for quasi-linear agents (that is, agents for which $C_i(p) = p$ for all $p$).  This follows because the original mechanism $M$ is truthful, and hence it is a weakly dominant strategy to truthfully report one's valuation over outcomes of $M$ if that valuation is $\gamma_i \cdot v_i$ (since the space of valuations is assumed to be closed under scalar multiples).  It is likewise a dominant strategy to report one's value truthfully for the additional good with random reserve price.  Thus, for each realization of $\gamma$, it would be a weakly dominant strategy to report $\gamma_i \cdot v_i$ if one were reporting valuations directly.  It is thus weakly dominant to report $v_i$ to the perturbed game, which implements the scaling by $\gamma_i$.

Second, for sufficiently small $\delta$, $C^{(\delta)}_i$ is convex, increasing, and is not subject to a hard budget constraint.

Third, the outcomes and payments in the perturbed game are continuous functions of $\alpha_i \in [0,1]$.  This follows from the fact that $\gamma_i$ is drawn from an atomless distribution: for any sufficiently small $\mu > 0$, as we let $\mu \to 0$, the distribution over outcomes and payments observed by declaring {$\alpha_i+\mu_i$} converges in total variation distance to the distribution over outcomes and payments observed by declaring {$\alpha_i$}.  In more detail: an agent who reports $\alpha_i$ will observe a uniform distribution over outcomes corresponding to uniform scaling bids lying in $[(1-\delta)\alpha_iv_i, \alpha_i v_i]$ and an agent who reports $(\alpha_i + \mu)$ will observe a uniform distribution over outcomes corresponding to uniform scaling bids lying in $[(1-\delta)(\alpha_i+\mu)v_i, (\alpha_i +\mu)v_i]$.  As $\mu \to 0$ these distributions over bids converge in the sense that they can be coupled with equality occurring with probability approaching $1$.  The observed outcomes can therefore also be coupled such that equality occurs with probability approaching $1$, and hence the resulting value and payments likewise converge as $\mu \to 0$.

Finally, each agent's expected payment, excluding the payment for the additional good, is weakly increasing in their declared multiplier $\alpha_i$ due to truthfulness.  The expected payment for the additional good is strictly increasing, due to the random reserve.  We conclude that the total payment of agent $i$ is strictly increasing in the declared multiplier $\alpha_i$.

Our game therefore satisfies the technical assumptions imposed in Section~\ref{sec:eq-exist}.  Thus, by the proof of Theorem~\ref{thm:equil.proper.general} presented in Section~\ref{sec:eq-exist}, an ROI-optimal pure Nash Equilibrium in uniform scaling strategies exists in the perturbed game, for all sufficiently small $\delta$.
\end{proof}

For each sufficiently small $\delta$, let $\alpha^{(\delta)}$ be a pure Nash equilibrium from the statement of Lemma~\ref{lem:pure.equil.ties}.
Since each $(\alpha^{(\delta)}, x^{(\delta)}(\alpha_i^{(\delta)}), p^{(\delta)}(\alpha_i^{(\delta)}))$ lies in a compact space (as $0\leq \alpha_i^{(\delta)} \leq 1$ for all $i$), there is a subsequence 
that converges to some $(\alpha^*, x^*, p^*)$ as $\delta \to 0$.
We wish to show that $(\alpha^*, x^*)$ is an ROI-optimal competitive auction equilibrium for the (unperturbed) auction, completing the proof of Theorem~\ref{thm:equil.proper.ties2}.


\begin{lemma}
There is a tie-breaking rule for mechanism $M$ that returns outcome $x^*$ and payments $p^*$ on input $\alpha^* \cdot v$, such that 
$\alpha^*$ is an ROI-optimal
pure Nash equilibrium under this tie-breaking rule.
\end{lemma}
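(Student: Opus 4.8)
The plan is to establish the two conjuncts of the statement separately and then combine them cheaply: first, that the limit $(x^*,p^*)$ is a legitimate selection from the tie-breaking correspondence $\overline{X}$ at the profile $\alpha^* \cdot v$; and second, that $\alpha^*$ is ROI-optimal with respect to the resulting rule. The crucial economy is that I never need to compute best responses directly: once ROI-optimality is established, Corollary~\ref{cor::ROI-implies-equil.ties} immediately promotes $\alpha^*$ to a pure Nash equilibrium under that rule. Since ROI-optimality is a purely local condition at the profile $\alpha^* \cdot v$ (it constrains only the payment there), it suffices to define the tie-breaking rule to equal $(x^*,p^*)$ at $\alpha^* \cdot v$ and to take any fixed valid selection from $\overline{X}$ at every other profile.

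For the first conjunct I would recall that the $\delta$-perturbed game feeds the bids $b^{(\delta)} = (\gamma_i \alpha_i^{(\delta)} v_i)_i$ into $M$ (alongside the auxiliary per-agent goods), where each $\gamma_i \in [1-\delta,1]$. Since $\gamma_i \to 1$ and $\alpha^{(\delta)} \to \alpha^*$, we have $b^{(\delta)} \to \alpha^* \cdot v$, while the auxiliary goods contribute at most $O(\delta)$ to every agent's value and payment and hence wash out in the limit. The perturbed outcome is an expectation over $\gamma$, i.e.\ an average of $M$'s outcomes over bid profiles drawn from a neighborhood of $\alpha^* \cdot v$ that shrinks as $\delta \to 0$. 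Consequently $(x^*,p^*)$ is a limit of convex combinations of points $(x(b),p(b))$ with $b \to \alpha^* \cdot v$, so by the definitions of $\overline{x}$ and $\overline{X}$ we get $(x^*,p^*) \in \overline{X}(\alpha^* \cdot v)$, which is exactly what is needed to make the proposed $(x',p')$ a well-defined tie-breaking rule for $M$.

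For the second conjunct I would show that $1/\alpha_i^*$ is a subderivative of the unperturbed $C_i$ at agent $i$'s equilibrium payment $P_i^* = p_i^*$. The input is that, by Lemma~\ref{lem:pure.equil.ties}, each $\alpha^{(\delta)}$ is ROI-optimal for the perturbed curve $C_i^{(\delta)}$, so $1/\alpha_i^{(\delta)}$ is a subderivative of $C_i^{(\delta)}$ at the total perturbed payment $\tilde P_i^{(\delta)}$. Two facts then drive the limit. First, $\tilde P_i^{(\delta)} \to P_i^*$ and $P_i^* \le B_i$: individual rationality (bidding $\alpha_i = 0$ yields utility $0$) bounds $C_i^{(\delta)}(\tilde P_i^{(\delta)})$ by the agent's value, which is at most $1 + O(\delta)$; since $C_i^{(\delta)}$ has slope $1/\delta$ beyond $B_i$, any excess satisfies $(\tilde P_i^{(\delta)} - B_i)/\delta \le 1 + O(\delta)$, forcing $\tilde P_i^{(\delta)} - B_i = O(\delta) \to 0$. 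Second, $C_i^{(\delta)}$ epi-converges to $C_i$ (they coincide on $[0,B_i]$ and $C_i^{(\delta)} \uparrow \infty = C_i$ beyond $B_i$), so by closedness of the subdifferential graph under epi-convergence, the limit $1/\alpha_i^*$ is a subderivative of $C_i$ at $P_i^*$. This is precisely ROI-optimality of $\alpha^*$ at $p_i^*$ under $(x',p')$, and Corollary~\ref{cor::ROI-implies-equil.ties} then completes the proof.

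The main obstacle is the boundary case of a binding budget, $P_i^* = B_i$, where $C_i$ exhibits a (possibly infinite) jump in its right derivative and one must rule out the perturbed subderivatives $1/\alpha_i^{(\delta)}$ converging to a spurious value created by the $1/\delta$ tail of $C_i^{(\delta)}$. Establishing $P_i^* \le B_i$ \emph{first}, and then appealing to left-continuity of the left derivative of a convex function (equivalently, to the epi-convergence closedness statement), is what makes this case go through. I would also separately note the degenerate possibility $\alpha_i^* = 0$: by the subderivative bound $1/\alpha_i^{(\delta)} \le R_i^+(\tilde P_i^{(\delta)})$ this can occur only when $R_i^-(B_i) = \infty$, in which case $\infty$ is a legitimate subderivative of the hard-budget $C_i$ at $B_i$ and ROI-optimality still holds.
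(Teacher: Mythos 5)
Your proposal is correct and follows essentially the same route as the paper: you show $(x^*,p^*)$ arises as a limit of convex combinations of outcomes of $M$ at bids converging to $\alpha^*\cdot v$ (hence a valid tie-breaking selection), pass the perturbed ROI-optimality conditions $1/\alpha_i^{(\delta)} \in \partial C_i^{(\delta)}(\tilde P_i^{(\delta)})$ to the limit, and invoke Corollary~\ref{cor::ROI-implies-equil.ties} to upgrade ROI-optimality to a pure Nash equilibrium. The only differences are in packaging: you justify the limit step via epi-convergence and closedness of subdifferential graphs where the paper argues directly with $\limsup$/$\liminf$ of the one-sided derivatives $R_i^{(\delta)\pm}$ and their right-/left-continuity, and your explicit verification that $P_i^* \le B_i$ (and the boundary remarks, whose ``only when $R_i^-(B_i)=\infty$'' qualifier is slightly off since the $1/\delta$ tail alone can cause the blow-up, though the conclusion still holds under the convention $R_i^+(B_i)=\infty$) makes precise steps the paper leaves implicit.
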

\begin{proof}
We first show that $(\alpha^*, x^*, p^*)$ is a valid auction outcome for mechanism $M$ under some tie-breaking rule.  First, since $\alpha^{(\delta)}_i \leq 1$
for each $\delta$, taking the limit as $\delta$ goes to zero yields $\alpha^*_i \leq 1$ for each $i$.  Moreover, for any fixed $\delta$, the allocation and payments $(x^{(\delta)}, p^{(\delta)})$ is a convex combination of outcomes of $M$ under uniform scaling bids with scalars lying in $[(1-\delta)\alpha_i^{(\delta)}, \alpha_i^{(\delta)}]$ for each $i$.  As $\delta \to 0$, this range converges to the singleton $\alpha^*$, and the resulting outcomes and payments converge to $(x^*, p^*)$.   Therefore $(x^*, p^*)$ is a convex combination of allocations obtained by sequences of {bids} 
converging to $\alpha^* \cdot v$, and therefore by definition $(x^*, p^*)$ lies in the convex closure of the allocation rule for $M$ on input $\alpha^* \cdot v$.  In particular, $(x^*, p^*)$ is a valid tie-breaking outcome for $M$ on input $\alpha^* \cdot v$.

Finally, we note that 
\[ \lim\sup_{\delta \to 0} R^{(\delta)+}_i(p^{(\delta)}_i(\alpha^{(\delta)})) 
\leq \lim\sup_{\delta \to 0} R^{(\delta)+}_i(p_i(\alpha^*,x^*))
= R^+_i(p_i(\alpha^*, x^*))\]
where the inequality follows because $\lim_{\delta \to 0}p^{(\delta)}_i(\alpha^{(\delta)}) = p_i(\alpha^*, x^*)$ and $R^{(\delta) +}_i$ is right-continuous (recalling that right
continuity implies the limit value is upper bounded by the value at the limit),
and the equality uses the fact that $\lim_{\delta \to 0}R^{(\delta)+}_i(p) = R_i^{+}(p)$ for all $p$.
On the other hand, Lemma~\ref{lem:pure.equil.ties} implies that equilibrium $\alpha_i^{(\delta)}$ is ROI-optimal (with respect to $C_i^{(\delta)}$) for all $\delta$; {therefore} $\alpha_i^{(\delta)} \geq 1 / R^{(\delta) +}_i(p^{(\delta)}_i(\alpha^{(\delta)}))$ for all $i$.  As $\alpha_i^{(\delta)} \to \alpha_i^*$ as $\delta \to 0$, we conclude that $\alpha_i^* \geq 1 / R^+_i(p_i(\alpha^*, x^*))$.
Similarly (by considering $\lim\inf_{\delta \to 0} R^{(\delta)-}_i(p^{(\delta)}_i(\alpha^{(\delta)}))$), we obtain $\alpha_i^* \leq 1 / R^-_i(p_i(\alpha^*, x^*))$).  We conclude that $\alpha^*$ is ROI-optimal for all $i$ under the tie-breaking rule that selects outcome $x^*$ and payments $p^*$, and thus by Corollary~\ref{cor::ROI-implies-equil.ties} it is an ROI-optimal pure Nash equilibrium, as required.

\end{proof}

%
%


\end{document}